\newcommand{\wellb}{\sf{Behaved}}
\newcommand{\sufx}{\overline{\mathbf{C}}}
\newcommand{\bc}{\overline{\sf{\bf{c}}}}
\newcommand{\ms}{\mathcal{S}}
\newcommand{\ml}{\mathcal{L}}
\newcommand{\mc}{\mathcal{C}}
\newcommand{\mx}{\mathcal{X}}
\newcommand{\mt}{\mathcal{T}}
\newcommand{\my}{\mathcal{Y}}
\newcommand{\mv}{\mathcal{V}}
\newcommand{\mU}{\mathcal{U}}
\newcommand{\advi}{\mbox{Adv}^i}
\newcommand{\advn}{\mbox{Adv}^n}
\newcommand{\adv}{\mathbf{Adv}}
\newcommand{\enc}{{\sf{Enc}}}
\newcommand{\dec}{{\sf{Dec}}}
\newcommand{\ce}{{\sf{CE}}}
\newcommand{\pr}{\mathbb{P}}
\newcommand{\bx}{{\mathbf{x}}}
\newcommand{\bbx}{{\mathbf{X}}}
\newcommand{\by}{{\mathbf{y}}}
\newcommand{\bby}{{\mathbf{Y}}}
\newcommand{\bz}{{\mathbf{z}}}
\newcommand{\ei}{{\sf{Error}}_{\mbox{I}}\left(u, \by_1, S; {\enc}\right)}
\newcommand{\peI}{P_{{\sf{type}\mbox{-I}}}^n\left(u,~\by_1\right)}
\newcommand{\bit}[1]{{\{0,1\}^{#1}}}
\newtheorem{claim}{Claim}
\newtheorem{theorem}{Theorem}
\newtheorem{prop}{Proposition}
\newtheorem{lemma}{Lemma}
\newtheorem{Definition}{Definition}[section]
\newenvironment{proof}{\noindent{\em Proof:}\hspace*{1em}}{\hfill$\Box$}
\begin{document}

\title{Causal Erasure Channels}

\author{Raef Bassily\footnotemark[1]\thanks{Computer Science and Engineering Department, The Pennsylvania State University, University Park, PA. \texttt{\{bassily,asmith\}@psu.edu}. Supported by National Science Foundation awards \#0941553 (CDI) and \#0747294 (PECASE). } \and Adam Smith\footnotemark[1]}

\date{}
\maketitle

\begin{abstract}
We consider the communication problem over binary \emph{causal adversarial erasure} channels. Such a channel maps $n$ input bits to $n$ output symbols in $\{0,1,\wedge\}$, where $\wedge$ denotes erasure. The channel is \emph{causal} if, for every $i$, the channel adversarially decides whether to erase the $i$th bit of its input based on inputs $1,...,i$, \emph{before} it observes bits $i+1$ to $n$. Such a channel is $p$-bounded if it can erase at most a $p$ fraction of the input bits over the whole transmission duration. Causal channels provide a natural model for channels that obey basic physical restrictions but are otherwise unpredictable or highly variable. For a given erasure rate $p$, our goal is to understand the optimal rate (the ``capacity'') at which a randomized (i.e., stochastic) encoder/decoder can transmit reliably across all causal $p$-bounded erasure channels.

In this paper, we introduce the causal erasure model and provide new upper bounds (impossibility results) and lower bounds (analyses of codes) on the achievable rate. Our bounds separate the achievable rate in the causal erasures setting from the rates achievable in two related models: random erasure channels (strictly weaker) and fully adversarial erasure channels (strictly stronger). Specifically, we show:
\begin{itemize}

\item A strict separation between random and causal erasures for all constant erasure rates $p\in(0,1)$. In particular, we show that the capacity of causal erasure channels is $0$ for $p\geq 1/2$ (while it is nonzero for random erasures).

\item A strict separation between causal and fully adversarial erasures for $p\in(0,\phi)$ where $\phi \approx 0.348$.

\item For $p\in[\phi,1/2)$, we show codes for causal erasures that have higher rate than the best known constructions for fully adversarial channels.
\end{itemize}

Our results contrast with existing results on correcting causal \emph{bit-flip} errors (as opposed to erasures) \cite{DJL08,LJD,HL,DJLS,DeyJL13}. For the separations we provide, the analogous separations for bit-flip models are either not known at all or much weaker.
\end{abstract}


\newcommand{\mypar}[1]{\bigskip \noindent{\bf {#1}.}}

\section{Introduction}

Reliable communication over erasure channels is a central topic in
coding and information theory. Erasure channels are noisy channels
in which symbols are either transmitted intact or ``erased'', that is,
replaced by a special symbol $\wedge$ denoting a visible error. They
are interesting in their own right (in settings where
transmission errors are detectable by the decoder), and as intermediate
abstractions in the construction of codes for other models.

The two classic approaches model erasure channels either as a
known stochastic process (c.f. Shannon\cite{shannon}), or as an
adversarial process subject only to a limit on the \emph{number}
of erasures it can introduce
(c.f. Hamming~\cite{hamming}). Adversarial models are more flexible, as
they capture varying or poorly understood channels. Yet the
maximum rate of reliable transmission over adversarial channels is
much lower than over stochastic channels with a similar rate of erasures.

In this paper, we introduce and study \emph{causal} adversarial
erasure channels. Such channels are adversarial, but limited to
introduce erasures online as the symbols are transmitted, based
only on the symbols sent so far. They provide a natural, intermediate
model between stochastic and fully adversarial models. In particular,
they capture any physical channel over which symbols are sent and
received sequentially. Examples include i.i.d. erasures, as well as a
large range of more complex channels (e.g., burst erasures). We prove
that the achievable rate of causal erasure channels lies strictly
between the achievable rates of analogous stochastic and fully
adversarial models. Our model is inspired by recent work on causal
\emph{error} channels \cite{DJL08,LJD,HL,DJLS,DeyJL13}, discussed in ``Previous Work'', below.

Specifically, an erasure channel is a randomized map from $\bit{n}$ to
$\{0,1,\wedge\}^n$.  The channel is causal if, for every $i$, the
channel  decides whether to erase the $i$th bit of its
input based on inputs $1,...,i$, \emph{before} it observes bits $i+1$
to $n$. The channel is $p$-bounded if it can erase at most $pn$
fraction of the input bits over the whole transmission duration. A
(stochastic) \emph{code} is a pair of (randomized) encoding/decoding
algorithms $(\enc,\dec)$ that map a message space $\mU$ to a codeword
in $\bit{n}$, and a received word in $\{0,1,\wedge\}^n$ to a candidate
message in $\mU$. Given $p$, the code is required
to transmit reliably (with high probability) across \emph{all}
$p$-bounded causal erasure channels. In particular, the channel's
behavior may depend on the code itself, and no secret
randomness is allowed to be shared between the encoder and decoder.
The rate of the code is $\log(|\mU|)/n$ (the ratio
of bits transmitted to channel uses).
We are interested in the
\emph{capacity} $C_p$ of causal erasure channels, that is, the maximum
achievable rate in the limit of large $n$, as a function of $p$. (See
``System Model'', below, for precise definitions.)

Causal channels provide a natural model for channels that
obey basic physical restrictions but are otherwise unpredictable or
highly variable. A line of recent work discussed below (``Previous
Work'') considers causal \emph{bit-flip} channels; ours is the first
to study causal erasures.

We provide new upper bounds (impossibility results) and lower bounds
(analyses of codes) on the achievable rate of codes for causal erasures.
To frame our results, consider the two other classes of $p$-bounded
channels mentioned above, namely random erasures and fully adversarial
erasures. The channel that erases a \emph{uniformly random} set of $pn$ positions (or,
essentially equivalently, erases each symbol independently with
probability $p$). The capacity of this channel is $1-p$ (and efficient
constructions are known that achieve this rate). In contrast, the best achievable rate over fully
adversarial channels is less well understood. In terms of asymptotic
rate, codes for $p$-bounded fully adversarial channels are equivalent
to codes in which every pair of valid codewords differ in at least
$pn+1$ positions.\footnote{This equivalence is trivial if we insist
  that the code have \emph{zero} probability of error over fully
  adversarial channels. The equivalence is nontrivial (but still holds, by a
  method-of-expectations argument) if the encoder/decoder can be randomized and a
  small probability of decoding error is allowed.}
Understanding the rate of such
codes is a long-standing open problem in coding theory; the best upper
bounds (a combination of the Bassalygo-Elias \cite{BE} and LP bounds~\cite{LP})
and lower bounds (given by the  Gilbert-Varshamov bound)  are
plotted in Figure~\ref{fig:bounds}.  A few features stand out: the
asymptotic achievable rate over fully adversarial channels is 0 for
$p\geq 1/2$, and the curve has unbounded slope as it approaches $p=0$
(specifically, the maximum rate is $1-\Theta(p\ln(\frac 1 p))$ as $p$ goes to 0).

\begin{figure}[t]
\centerline{\includegraphics[width=0.75\textwidth]{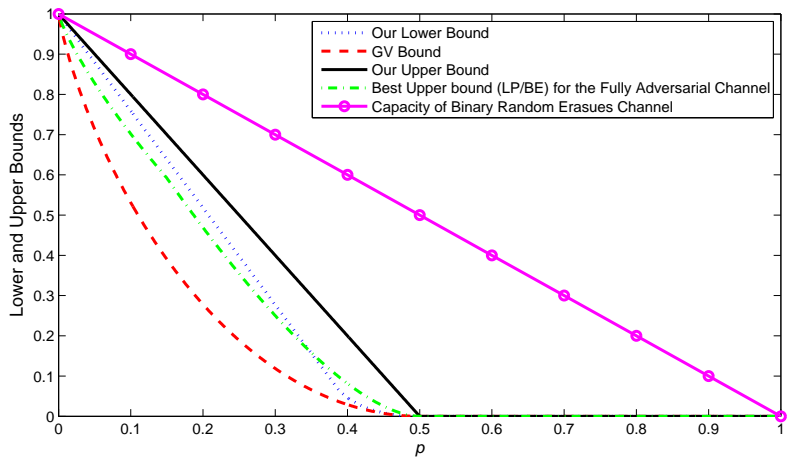}}\caption{Upper and lower bounds on the capacity of the binary-erasure channel.}\label{fig:bounds}
\end{figure}

\mypar{Our Results}
We give two main bounds on the capacity $C_p$ of $p$-bounded causal erasure
channels (depicted in Fig.~\ref{fig:bounds}). We show:
\begin{enumerate}
\item The capacity $C_p$ is at most $(1-2p)^{+}$. This is
  the same value as the Plotkin bound for codes with minimum distance
  $pn+1$, but it requires a different proof; see ``Techniques'', below.

  We show this by giving a particular adversarial strategy, analogous
  to the ``Wait and Push'' strategy of Dey et al. \cite{DJLS}  in the bit-flip setting.

\item The capacity $C_p$ is at least
the function $R_L(p)$ given in Theorem~\ref{lower_bound} and plotted in Fig.~\ref{fig:bounds}.
We show this via a random coding argument inspired by (but quite
different from) that of \cite{HL}. The resulting
encoder/decoder pair are not polynomial time in general.
\end{enumerate}

\bigskip
Our bounds have several implications for the relation between
random, causal, and fully-adversarial erasure models.

\begin{itemize}

\item   For every constant $p\in(0,1)$, the achievable rate of codes
  for causal channels is strictly worse than the rate of codes for
  random errors (since $1-2p< 1-p$). In particular, the achievable rate over causal
  channels is 0 for $p\geq 1/2$ (whereas it is nonzero for random
  errors).

\item The capacity of causal erasure channels is strictly greater than
  that of fully adversarial erasures for  $p\in(0,\phi)$ where $\phi
  \approx 0.348$. This is the point where our lower bound intersects
  the best known upper bounds on the rate of codes with minimum
  distance $pn+1$ (see Figure~\ref{fig:bounds}).

  Moreover, the graph of our lower bound has finite slope at $p=0$,
  meaning that for low erasure rates, $O(pn)$ bits of redundancy
  suffice to tolerate causal erasures, while fully adversarial ones
  require $\Theta(n p \ln(\frac 1 p))$ bits of redundancy.

\item For $p\in[\phi,\frac12)$, we show codes for causal erasures that
  have higher rate than the best known constructions for fully
  adversarial channels. That is,  our lower bound lies strictly above
  the Gilbert-Varshamov bound for all $p$ in $(0,\frac 12)$. Our lower
  bound lies below the best upper bounds for $p\in(\phi,\frac 12)$, however, and a strict
  separation in that range remains an open question.
\end{itemize}

\mypar{Previous Work} A number of works have sought to find middle
ground between the optimism of random-error models and more
pessimistic fully adversarial, ``combinatorial'' error models. For
example, arbitrarily varying channels \cite{ahlswede,CN88a} allow the each symbol to be
corrupted by one of several operators (selected
adversarially). Computationally-bounded channels \cite{lipton,MPSW,GuruswamiS10} consider
channels whose action can be described by a low-complexity circuit.

Most relevant to this work, Dey et al.~\cite{DJL08,LJD,DJLS} and Haviv and
Langberg~\cite{HL} recently studied causal (or ``online'')
\emph{bit-flip} channels (as well as errors over larger alphabets~\cite{DeyJL13}). They describe upper and lower bounds on the
capacity of such channels, and our work was inspired by their
approaches. As in the case of erasures, there are three natural,
nested models
for bit-flip errors: random, causal and fully adversarial.

Our results paint a much more complete picture of the situation for
causal erasures than is known for causal bit-flip errors.
For each of the separations we show, the analogous separation for
bit-flip errors is either not known or much weaker.
\begin{itemize}
\item A strict separation between causal and random bit-flip errors is
  not known to hold for all error rates; for small error rates (less than
  about 0.08), the best-known upper bound on causal errors is the
  capacity of the binary symmetric (random bit-flip error) channel~\cite{DJLS}.
  \item No strict separation is known between causal errors and fully
    adversarial errors. In fact, it is only for a small range of error
    rates that any codes are known to beat the Gilbert-Varshamov bound~\cite{HL}.
\end{itemize}

One may view our results on erasures as an indication that the
separations among bit-flip error models are, in fact, strict. We hope
that our results provide some insight into these questions.

\mypar{Techniques} As mentioned above, the proofs of our upper and
lower bounds are inspired by techniques of \cite{DJLS,HL}. This is
natural, since any
erasure channel can be converted to a bit-flip channel by replacing
erasures with random bits. Upper bounds on erasure channels thus imply upper bounds
on bit-flip channels (and vice-versa for lower bounds). Our results
are much stronger, however, than what follow that way from previous work.

Our upper bound is a strengthening
of one of the bounds of \cite{DJLS} (and of the Plotkin bound).
The main technical innovation is in the lower bound, the heart of which is a bound on the size of a
``forbidden ball'' (a set of points around a codeword within which
the presence of other codewords may cause a decoding error).  The
geometry of this ball is quite different from the analogous structure
for bit-flip errors, and the proof ends up being highly specific to
erasures. 


\section{System Model}\label{system}

We consider communication problem over the class of causal erasure adversarial channels with parameter $p\in[0,1]$, denoted by $\ce_p$. For a transmission duration of $n$ symbols, a channel $\adv\in\ce_p$ is defined by the triple $(\mx^n,\{\advi,i=1,..,n\},\my^n)$ where $\mx=\{0,1\},~\my=\{0,1,\wedge\}$ is the input, output alphabet per symbol, and $\advi:\mx^i\times\my^{i-1}\rightarrow\my,~i=1,...,n$ is a (randomized) function that, at time instant $i$, maps the observed sequence of input symbols up to the current instant $i$ , $x_1^i\in\mx^i$, together with the sequence of all the previous output symbols up to $i-1$, $y_{1}^{i-1}\in\my^{i-1}$, to an output symbol $y_i\in\{x_i,\wedge\}$ such that by the end of transmission, i.e., when $i=n$, the number of erased symbols in $\by$ is at most $pn$. 
Except for the causality constraint and the constraint on the total number of erasures, the channel's behavior is arbitrary.

The transmitter's message set is denoted by $\mU=\{0,1\}^{\lceil nR\rceil}$ for some $R\geq 0$. For simplicity of notation, we assume, w.l.o.g., that $nR$ is an integer. In this paper, we consider two different settings for the message to be transmitted: in Section~\ref{UPPER} where we derive an upper bound on the capacity of $\ce_p$, we will assume a uniformly distributed message $U$ over the set $\mU$ whereas, in Section~\ref{LOWER} where we derive a lower bound on the same capacity, we will assume that the message is arbitrarily fixed and even known to channel before the transmission starts and hence, our construction works for any message $u\in\mU$. Adopting these two different settings in the upper and lower bounds is meant to give stronger results. That is, an upper bound for the uniform message setting implies the same upper bound for any distribution over the message set. On the other hand, a lower bound for the ``worst case'' setting where the message is arbitrarily chosen and known to the channel beforehand implies the same lower bound for any distribution over the message set.


A $(2^{nR},n)$ code is defined as a pair $(\enc,\dec)$ where $\enc:\mU\rightarrow\mx^n$ is a (stochastic) encoder and \mbox{$\dec:\my^n\rightarrow\mU$} is a decoder. No shared randomness is assumed between the encoder and the decoder. The encoder maps (with the possible use of local randomness) a message $U\in\mU$ to a codeword $\bbx\in\mx^n$ which serves as an $n$-bit input of $\adv$. With no loss of generality, $\enc$ is assumed to be injective. That is, for every distinct pair of messages $u,~u'$, we have $\enc(u)\neq\enc(u')$ with probability $1$. The decoder $\dec$ maps the channel's output sequence $\bby\in\my^n$ (with at most $pn$ erasures) to an estimate of the transmitted message $\hat{U}$.

Since we assume different settings of the message distribution in Sections~\ref{UPPER} and \ref{LOWER}, we will have two different versions of the error criterion. In Section~\ref{UPPER}, since the message $U$ is assumed to be uniformly distributed over $\mU$, our error criterion will be the average probability of decoding error. With respect to a code $(\enc, \dec)$ and a channel $\adv\in\ce_p$, the average probability of decoding error denoted by $P_{avg}^{n}\left(\enc,\dec,\adv\right)\triangleq\pr\left(\hat{U}\neq U\right)$ is given by
\begin{align}
&P_{avg}^n\left(\enc,\dec,\adv\right)\triangleq\frac{1} {2^{nR}}\sum_{u\in\mU}\sum_{\bx\in\mx^n}\pr\left(\enc(u)=\bx\right)\sum_{\by\in\my^n}\pr\left(\adv(\bx)=\by\right)\pr\left(\dec(\by)\neq u\right)\label{P_avg-e}
\end{align}
where $\pr\left(\enc(u)=\bx\right)$ is the conditional probability that the output of the encoder $\enc$ is $\bx\in\mx^n$ when the input message is $u\in\mU$, $\pr\left(\advn(\bx)=\by\right)$ is the conditional probability that the channel $\adv$, after the whole transmission duration, outputs the sequence $\by\in\my^n$ given that the input sequence is $\bx\in\mx^n$, and $\pr\left(\dec(\by)=\hat{u}\right)$ is the conditional probability that the output of the decoder $\dec$ is $\hat{u}\in\mU$ given that its input (the received sequence) is $\by\in\my^n$. In Section~\ref{LOWER}, since we consider the setting where the message is arbitrarily fixed and known to the channel in advance, our error criterion will be the maximum probability of decoding error over all messages $u\in\mU$ (i.e., the worst-case probability of error with respect to the set of all messages). With respect to a code $(\enc, \dec)$ and a channel $\adv\in\ce_p$, the maximum probability of decoding error over all the messages, denoted by $P_{max}^n\left(\enc,\dec,\adv\right)$, is given by
\begin{align}
&P_{max}^n\left(\enc,\dec,\adv\right)\triangleq\max_{u\in\mU}\sum_{\bx\in\mx^n}\pr\left(\enc(u)=\bx\right)\sum_{\by\in\my^n}\pr\left(\advn(\bx)=\by\right)\pr\left(\dec(\by)\neq u\right)\label{P_max-e}
\end{align}
When $\enc,~\dec,$ and $\adv$ are clear from the context, we will drop them from the above notation and use just $P_{avg}^n$ (or $P_{max}^n$).


A rate $R$ is said to be achievable for $\ce_p$ if for every $\epsilon>0$ there exists a sequence of codes $\{(2^{n(R-\epsilon)},n):~n\geq1\}$ such that for every $\beta>0$ there exists an integer $n_{\beta}$ such that for all $\adv\in\ce_p$, we have $P_{max}^n<\beta$ for all $n>n_{\beta}$. Note that since the condition on $P_{max}^n$ must hold for every $\adv\in\ce_p$, $\adv$ is  allowed to depend on the code. The capacity of $\ce_p$, denoted as $C_p$, is defined as the supremum of all achievable rates for $\ce_p$.

As a remark on notation, we will use upper-case letters for random variables, lower-case letters for fixed realizations, bold-face letters for vectors, and normal letters for scalars. We will also use $u$ to denote a message


\section{Upper Bound}~\label{UPPER}
Our upper bound on $C_p$, denoted by $R_{\sf{Upper}}(p)$, is formally stated in the following theorem.
\begin{theorem}\label{upper_bound}
For every $p\in[0,1]$, the capacity of $\ce_p$, $C_p$, is at most
\begin{align}
C_p\leq R_{\sf{Upper}}(p)&\triangleq (1-2p)^{+}
\end{align}
where $(x)^+=\max(x,0)$ for $x\in\mathbb{R}$.
\end{theorem}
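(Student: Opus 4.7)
My plan is to exhibit a specific causal $p$-bounded adversary in the style of the ``Wait and Push'' strategy of \cite{DJLS}, and use it to force the maximum error probability away from zero for any code of rate $R>(1-2p)^+$. Set $t=(1-2p)^+ n$. The adversary first \emph{waits} during positions $1,\ldots,t$ (passing the input through unaltered and recording the prefix $P=\bx_1^t$), and then \emph{pushes} during positions $t+1,\ldots,n$: after $P$ is observed, it selects a ``target'' codeword $\bx^*$ sharing the prefix $P$ whose suffix is within Hamming distance $pn$ of the true codeword's suffix, and erases exactly those positions $i>t$ at which $\bx_i\neq\bx^*_i$. Such a target exists whenever the class $\mU_P=\{u\in\mU:\enc(u)\text{ may have prefix }P\}$ contains two codewords whose suffixes are within $pn$ of each other, a condition that Plotkin's inequality applied to the suffix code in $\{0,1\}^{2pn}$ guarantees whenever $|\mU_P|$ is large enough; and the rate assumption $R>1-2p$ ensures that $|\mU|/2^t$ blows up exponentially, so most prefixes have this property (indeed applying Plotkin to the entire codebook already guarantees at least one such pair somewhere in $\mU$). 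Causality and the $p$ budget are both immediate from the construction.

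The analysis reduces to a \emph{pairing identity}: when the true message is $u$ and the adversary targets $u^*\in\mU_P$ with suffix distance at most $pn$ from $u$'s, the received word coincides with the one produced when the roles of $u$ and $u^*$ are swapped, because outside the (at most $pn$) differing positions the two codewords already agree, and the differing positions are erased in both scenarios. A deterministic decoder, being a function of the received word, must therefore produce the same output $\hat U$ in the two scenarios, so at most one of $u,u^*$ can be decoded correctly; the other is mis-decoded with probability $1$. This immediately gives $P_{max}^n(\enc,\dec,\adv)\geq 1/2$ for some message, so achievability of $R$ fails and $C_p\leq(1-2p)^+$. The degenerate case $p\geq 1/2$ (where $t=0$) is handled by applying exactly the same argument to pairs in the full codebook.

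The main obstacle I anticipate is the quantitative step, in which one wants to turn the pairing identity into a \emph{non-vanishing} lower bound on the average (not maximum) error probability, since Section~\ref{UPPER} is phrased in terms of $P_{avg}^n$ under the uniform-message setting. Committing to a single Plotkin-guaranteed close pair already yields $P_{max}^n\geq 1/2$ (which suffices since $P_{max}^n\geq P_{avg}^n$), but obtaining a constant lower bound on $P_{avg}^n$ itself requires randomizing the choice of target over $\mU_P$ and summing the pairing-identity contributions across all close pairs and all prefixes. This is the step where the wait phase plays its crucial role, letting the adversary restrict attention to a suffix code of length $2pn$ and invoke Plotkin within it to locate many close suffix pairs. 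A routine secondary subtlety is the stochastic encoder, handled by applying Plotkin to the union of codeword supports and averaging over the encoder's internal randomness.
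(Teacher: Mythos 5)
Your overall plan (a wait--push adversary plus Plotkin on the suffix, yielding a pair of confusable messages) is the same high-level idea as the paper, and for a \emph{deterministic} encoder your argument is essentially complete and correct: pigeonhole a heavy prefix $P$, apply Plotkin to the suffix code of length $2pn$ restricted to $\mU_P$ to find two distinct messages $u\neq u^\ast$ whose suffixes differ in at most $pn$ positions, and erase exactly those differing suffix positions; the two received words coincide, so any decoder errs on at least one of $u,u^\ast$, giving $P_{max}^n\geq 1/2$. This is a valid route to $C_p\leq(1-2p)^+$ in that special case, and it is slightly more elementary than the paper's (no entropy, no random sampling).

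However, the step you label a ``routine secondary subtlety''---the stochastic encoder---is actually the crux, and your proposal has a genuine gap there. When $\enc$ is randomized, (i) a Plotkin-close pair of codewords in the union of supports may encode the \emph{same} message with two different random seeds, in which case confusing them is not a decoding error; and (ii) even if the close pair encodes distinct messages, the encoder may emit either of those particular codewords with only negligible probability, so committing to that one pair does not force $P_{max}^n$ away from $0$. The pairing identity is brittle: with a stochastic encoder, ``the codeword of $u$'' is a distribution, and the adversary, having observed only the prefix, cannot deterministically single out a close mate for the unknown realized suffix. The paper's proof is built precisely to overcome this: it conditions on a prefix from a set $A_\epsilon$ of high residual entropy $H(U\mid\bbx_1=\bx_1)$ (Lemma~\ref{step1_ub}), has the adversary \emph{sample} a target from the conditional distribution $\pr_{\bbx\mid\bbx_1=\bx_1}$, and then invokes the entropy-to-distinctness lemma of~\cite{DJLS} (the paper's Lemma~\ref{suff_entropy}) together with Plotkin's bound on the \emph{average} pairwise distance of an i.i.d.\ sample of $m=\Theta(1/\epsilon)$ codewords to guarantee that, with probability $\epsilon^{O(1/\epsilon)}$, the sampled target both encodes a different message and lies within $pn$ of the true codeword. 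Neither the entropy conditioning nor the random sampling appears in your proposal, and without them the argument does not extend beyond deterministic encoders. Also note the slack: the paper waits $\ell=(1-2p+\epsilon/2)n$ bits (not $(1-2p)n$), so the suffix has length $(2p-\epsilon/2)n$ and Plotkin yields average distance strictly below $pn$; this margin is what Markov's inequality later converts into a positive constant probability. With $t=(1-2p)n$ exactly, that slack disappears.
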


To prove this upper bound on $C_p$, we show that there exists an adversarial strategy run by some $\adv\in\ce_p$ that imposes a decoding error with a probability bounded away from zero for any $(2^{nR},n)$ code with $R>R_{\sf{Upper}}(p)$. In particular, we show that for every $\epsilon>0$, $R_{\sf{Upper}}(p)+\epsilon$ is not achievable for $\adv\in\ce_p$, i.e., no matter what $(2^{n(R_{\sf{Upper}}(p)+\epsilon)},n)$ code is used or how large $n$ is, there is an adversarial strategy that causes $P_{avg}^n$ to be bounded from below by a positive constant that does not depend on $n$. We assume here a uniformly distributed message $U$ over $\mU$ as discussed in the previous section. Note that this immediately implies the same result if $P_{max}^n$ criterion is used instead and hence our result is even stronger.

The adversarial strategy $\adv$ used is quite similar to the one proposed in \cite{DJLS}. Our adversarial strategy is a ``wait-push'' strategy where the channel (i.e., the adversary) splits the transmission in two phases: (i) the \emph{wait phase}, where the channel observes a prefix $\bx_1$ of length $\ell$ bits (to be specified later) of the transmitted codeword $\bx$ without erasing any bits in this phase. The channel uses this phase to construct a list $\ml_{\bx_1}$ (whose size is potentially smaller than size of the whole code) of ``candidate'' codewords that are consistent with the observed prefix $\bx_1$ among which is the actual codeword chosen by the transmitter, (ii) the \emph{push phase}, where the channel chooses a codeword $\bx'$ randomly from $\ml_{\bx_1}$ (which, with a positive probability, corresponds to a different message than the one originally chosen by the transmitter), then for the last $n-\ell$ bits of the transmission, i.e., for $i=\ell+1,...,n$, the channel erases the bit $x_i$ of $\bx$ whenever $x_i\neq x_i'$ (where $x_i'$ is the $i$th bit of $\bx'$).


We denote the $\ell$-prefix of codeword $\bx$ by $\bx_1$ and $(n-\ell)$-suffix by $\bx_2$. Similarly, the last $n-\ell$ bits of the channel's output sequence $\by$ is denoted by $\by_2$. Before we give the formal statements that constitute the main body of the proof of Theorem~\ref{upper_bound}, we will informally describe the proof steps to give some intuition about the underlying idea of the proof.
First suppose that by the end of the waiting phase the observed prefix of the transmitted codeword is $\bx_1$, then the remaining uncertainty about the message $U$ (at the receiver and the channel) is given by $H(U|\bbx_1=\bx_1)$. Let's consider the set of prefixes $\bx_1$ for which such uncertainty is large enough. Namely, we define
\begin{align}
A_{\epsilon}&\triangleq\{\bx_1: \bx~\text{is a codeword},~ H(U|\bbx_1=\bx_1)>n\frac{\epsilon}{4}\}\label{A_eps}
\end{align}
Our first step of the proof is to show that, for some choice of $\ell$, the probability that the observed prefix $\bbx_1$ of the actual codeword lies in $A_{\epsilon}$ is a strictly positive constant.

Next, \emph{conditioning on such event}, for every prefix $\bx_1\in A_{\epsilon}$, we introduce a list, denoted by $\ml_{\bx_1}$, which contains all the codewords that shares the same prefix $\bx_1$. More formally, $\ml_{\bx_1}$ is defined as
\begin{align}
\ml_{\bx_1}&\triangleq\{\bx'\in\mx^n: \exists u'\in\mU~\text{s.t.}~ \enc(u')=\bx',~\bx_1'=\bx_1\}\label{list_bx_1}
\end{align}
Our goal then is to show that the size of such list is small enough such that a codeword picked up randomly by the channel (according the conditional distribution of $\bbx$ given $\bbx_1=\bx_1$) from such list will, with a strictly positive constant probability, end up being: (i) an encoding of a different message other than the actual message, and (ii) at a Hamming distance less than $pn$ from the actual codeword (corresponding to the actual message) that is originally transmitted. Hence, by pushing the transmission towards such fake codeword in the push phase, the channel will succeed, with a strictly positive constant probability, in fooling the decoder causing it to believe that the transmitted codeword is the fake one and hence rendering a decoding error.

The following lemma constitutes the first step of the proof. In this lemma, we formally give a lower bound on the probability of the event that the $\ell$-prefix $\bbx_1$ of the actual codeword $\bbx=\enc(U)$ lies in the set $A_{\epsilon}$ for a specific choice of $\ell$.

\begin{lemma}\label{step1_ub}
 Let $p\in(0,\frac{1}{2}]$ and $0<\epsilon<4p$. Set $R=R_{\sf{Upper}}(p)+\epsilon$ and $\ell=\left(R_{\sf{Upper}}(p)+\frac{\epsilon}{2}\right)n$. Let $E_1$ denote the event $\{\bbx_1\in A_{\epsilon}\}$ where $A_{\epsilon}$ is given by (\ref{A_eps}). Then, we must have $\pr(E_1)\geq\frac{\epsilon}{4}$
\end{lemma}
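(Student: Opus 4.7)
\textbf{Proof plan for Lemma \ref{step1_ub}.} The goal is to show that the (random) $\ell$-prefix of the transmitted codeword has a significant amount of residual uncertainty about the message. The strategy has two ingredients: an entropy-based lower bound on the \emph{average} conditional entropy $H(U \mid \bbx_1)$, and a reverse-Markov argument to convert the average statement into a statement about how often the pointwise quantity $H(U \mid \bbx_1 = \bx_1)$ is large.

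\emph{Step 1: Lower bound on $H(U \mid \bbx_1)$.} Since $U$ is uniform on $\mU$, we have $H(U) = nR = n(1 - 2p + \epsilon)$. By the chain rule,
\begin{equation*}
H(U \mid \bbx_1) \;=\; H(U) - I(U;\bbx_1) \;\geq\; H(U) - H(\bbx_1).
\end{equation*}
Since $\bbx_1$ takes values in $\{0,1\}^{\ell}$, we have $H(\bbx_1) \leq \ell = n(1 - 2p + \epsilon/2)$, so
\begin{equation*}
H(U \mid \bbx_1) \;\geq\; nR - \ell \;=\; n\epsilon/2.
\end{equation*}

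\emph{Step 2: Pointwise upper bound on $Z(\bx_1) := H(U \mid \bbx_1 = \bx_1)$.} Because the encoder is injective with probability $1$, the message $U$ is a deterministic function of the whole codeword $\bbx = (\bbx_1, \bbx_2)$, so $H(U \mid \bbx_1 = \bx_1, \bbx_2) = 0$. Consequently, for every $\bx_1$,
\begin{equation*}
Z(\bx_1) \;=\; H(U \mid \bbx_1 = \bx_1) \;\leq\; H(\bbx_2 \mid \bbx_1 = \bx_1) \;\leq\; n - \ell \;\leq\; n.
\end{equation*}
(If $R > 1$, no injective encoder into $\{0,1\}^n$ exists, so the hypothesis of the lemma is vacuous; hence we may assume $R \leq 1$.)

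\emph{Step 3: Reverse Markov inequality.} Writing $E[Z] = H(U \mid \bbx_1)$ and combining Steps 1 and 2, the random variable $Z(\bbx_1)$ satisfies $E[Z] \geq n\epsilon/2$ and $0 \leq Z \leq n$. A standard reverse-Markov argument (splitting the expectation at the threshold $t = n\epsilon/4$) gives
\begin{equation*}
E[Z] \;\leq\; \pr(Z > t)\cdot n \;+\; \bigl(1 - \pr(Z > t)\bigr)\cdot t,
\end{equation*}
so that
\begin{equation*}
\pr(\bbx_1 \in A_{\epsilon}) \;=\; \pr(Z > n\epsilon/4) \;\geq\; \frac{E[Z] - n\epsilon/4}{n - n\epsilon/4} \;\geq\; \frac{\epsilon/4}{1 - \epsilon/4} \;>\; \frac{\epsilon}{4},
\end{equation*}
which is the desired conclusion.

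\emph{Main obstacle.} The proof is short and largely bookkeeping; the only subtlety is choosing the right pointwise upper bound on $Z$ so that the reverse-Markov step yields the clean constant $\epsilon/4$. Using the trivial bound $Z \leq nR$ would require $R \leq 1$ explicitly and produce uglier constants, while exploiting injectivity to get $Z \leq n - \ell$ (or simply $Z \leq n$) makes the denominator small enough that the lower bound on $\pr(E_1)$ is at least $\epsilon/4$ for every $p \in (0, 1/2]$ and every $\epsilon \in (0, 4p)$ for which the lemma is non-vacuous.
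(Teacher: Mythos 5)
Your proof is correct and follows essentially the same route as the paper's: lower-bound $H(U\mid\bbx_1)\geq n\epsilon/2$ via $I(U;\bbx_1)\leq H(\bbx_1)\leq\ell$, then apply Markov (equivalently, reverse-Markov on the bounded variable $Z = H(U\mid\bbx_1=\bx_1)$) to lift the average bound to a pointwise one. The only cosmetic difference is the pointwise upper bound used in the Markov step — you use $Z\leq n$ (via injectivity), whereas the paper applies Markov to $nR - Z\geq 0$ and then invokes $R\leq 1$; both yield $\pr(E_1)\geq\epsilon/4$.
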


The next lemma furnishes the central part of the proof of Theorem~\ref{upper_bound} and highlights the main idea of a successful ``push'' strategy.
\begin{lemma}\label{main_lem_ub}
Let $p,~\epsilon,~\ell,$ and $R$ be as in Lemma~\ref{step1_ub}. Let $\bx_1$ be a legitimate codeword prefix and let $\ml_{\bx_1}$ be as defined in (\ref{list_bx_1}). Let $\bbx'$ denote a codeword that is randomly sampled from $\ml_{\bx_1}$ according to the conditional distribution of the encoder's output $\bbx$ given that $\bbx_1=\bx_1$. Let $U'$ be the message corresponding to $\bbx'$. Let $E_2$ denote the event $\{U'\neq U,~d_{H}\left(\bbx', \bbx\right)\leq np\}$ where $d_{H}(.,.)$ is the Hamming distance between two binary vectors of length $n$, $U$ is the original message, and $\bbx=\enc(U)$ is the original codeword that is being transmitted. Then, we must have
\begin{align}
\pr\left(E_2~\vert~\bbx_1=\bx_1\right)&\geq\epsilon^{O(1/\epsilon)}~ ~\text{whenever }~\bx_1\in A_{\epsilon}\label{lem2_lb}
\end{align}
for all sufficiently large $n$, where $A_{\epsilon}$ is as given by (\ref{A_eps}).
\end{lemma}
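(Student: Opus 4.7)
My proof strategy combines three ingredients: an entropy bound that lower-bounds the list of codewords consistent with the observed prefix, the Plotkin bound that forces many pairs in that list to be close in Hamming distance, and an iterative pigeonhole argument that extracts a close pair with significant product mass under the conditional distribution $q$ of $\bbx$ given $\bbx_1=\bx_1$.

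First, the hypothesis $\bx_1\in A_\epsilon$ together with the injectivity of $\enc$ yields $|\ml_{\bx_1}|\ge 2^{n\epsilon/4}$, since any distribution of entropy $h$ has support of size at least $2^h$. All codewords in $\ml_{\bx_1}$ share the prefix $\bx_1$ of length $\ell=(1-2p+\epsilon/2)n$, so pairwise Hamming distances are determined entirely by the suffix of length $m=(2p-\epsilon/2)n$. Since $2(np+1)-m=\epsilon n/2+2>0$, Plotkin's bound applied to the suffixes shows that any anticode in $\ml_{\bx_1}$ --- a subset with pairwise Hamming distance strictly greater than $np$ --- has size at most $\kappa:=\lceil 2(np+1)/(2(np+1)-m)\rceil=O(1/\epsilon)$. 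Greedily building a maximal anticode $A\subseteq\ml_{\bx_1}$ then gives $|A|\le\kappa$, and by maximality every codeword of $\ml_{\bx_1}$ lies within Hamming distance $np$ of some $a\in A$. Thus the balls $\{B(a,np):a\in A\}$ cover $\ml_{\bx_1}$, and by pigeonhole some $a^*\in A$ carries conditional mass $q(B(a^*,np))\ge 1/|A|\ge 1/\kappa=\Omega(\epsilon)$.

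This single pigeonhole step is not enough, because two independent draws from $q$ both landing in $B(a^*,np)$ are only guaranteed to be within Hamming distance $2np$, whereas the lemma requires distance $\le np$. To bridge this gap I would iterate a peel-or-stop argument. At iteration $t$, with remaining support $\Sigma_t$ and conditional distribution $q_t$, let $v_t^*$ be a $q_t$-heaviest codeword. Either (Case~A) the ball $B(v_t^*,np)$ contains enough $q_t$-mass outside $\{v_t^*\}$ to exhibit a pair $(v,w)$ with $d_H(v,w)\le np$ and $q(v)q(w)\ge \epsilon^{O(1/\epsilon)}$, at which point we stop --- since the event $\{(\bbx,\bbx')=(v,w)\}$ then has probability at least $\epsilon^{O(1/\epsilon)}$ under the iid draw and simultaneously implies $U'\ne U$ (by injectivity of $\enc$) and $d_H(\bbx,\bbx')\le np$; or (Case~B) the ball around $v_t^*$ is essentially mass-free apart from $v_t^*$ itself, in which case we delete $B(v_t^*,np)\cap\Sigma_t$ from the support and recurse with the rescaled $q_{t+1}$. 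In Case~B the successive centers $v_0^*,v_1^*,\ldots$ are pairwise at Hamming distance strictly greater than $np$ by construction, so Plotkin forces the recursion to terminate within $\kappa$ iterations, which guarantees that Case~A must fire at some $t^*\le\kappa$.

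The main technical obstacle is the bookkeeping that makes the product mass at the terminal Case~A reach the claimed $\epsilon^{O(1/\epsilon)}$. The entropy bound $H(U|\bbx_1=\bx_1)>n\epsilon/4$ is invoked a second time in the form $M_t\le 1-\Omega(\epsilon)$ for the heaviest atom of $q_t$, obtained from $H(q_t)\le h_2(M_t)+(1-M_t)\log|\Sigma_t|$; this prevents any single Case~B peel from consuming too much of the surviving $q$-mass, so $\mu_{t^*}:=q(\Sigma_{t^*})$ stays bounded below by a positive constant after $O(\kappa)$ peels. Each peeling step also costs at most a factor $1/\kappa$ from the pigeonhole over sub-balls, for a cumulative loss of $\kappa^{-O(\kappa)}=\epsilon^{O(1/\epsilon)}$. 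The delicate part --- and the piece where I would expect the authors' proof to be most careful --- is controlling the entropy of $q_t$ across peelings, so that the bound $M_t\le 1-\Omega(\epsilon)$ continues to hold and the individual masses $q_{t^*}(v),q_{t^*}(w)$ inside the terminal ball can be extracted (possibly via one further layer of Plotkin-within-the-ball) at the $\epsilon^{O(1/\epsilon)}$ level.
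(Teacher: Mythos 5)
Your overall shape (Plotkin plus an entropy-driven anti-concentration argument) overlaps with the paper's, but there is a genuine gap in your final step that breaks the argument, plus two secondary problems you have flagged but not resolved.

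The fatal gap is the claim that the event $\{(\bbx,\bbx')=(v,w)\}$ ``implies $U'\neq U$ (by injectivity of $\enc$).'' Injectivity, as defined in Section~\ref{system}, says that \emph{distinct messages} produce distinct codewords; it does \emph{not} say that distinct codewords correspond to distinct messages. The encoder is stochastic, so a single message $u$ may be encoded to many different codewords depending on the encoder's local coins. Your entire peel-or-stop machinery operates on the codeword distribution $q$ and produces a close pair of distinct \emph{codewords}; nothing in it prevents those two codewords from being two different encodings of the same message $u$, in which case $U'=U$ and $E_2$ does not occur. The hypothesis $\bx_1\in A_\epsilon$ gives you high \emph{message} entropy $H(U\mid \bbx_1=\bx_1)>n\epsilon/4$, and that is the resource you must spend to force $U'\neq U$ — you use it only once (to lower-bound $|\ml_{\bx_1}|$ and to bound the heaviest codeword atom), never to control the probability that the two draws carry the same message.

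Two secondary issues. First, in Case A you assert the existence of a single pair $(v,w)$ in the ball with $q(v)q(w)\ge\epsilon^{O(1/\epsilon)}$; but the mass outside $\{v_t^*\}$ in $B(v_t^*,np)$ may be spread over exponentially many codewords, each individually negligible, so no such pair need exist. (You can instead bound the probability of the event $\{\bbx=v_t^*,\ \bbx'\in B(v_t^*,np)\setminus\{v_t^*\}\}$, which fixes this — but then the message-distinctness gap bites even harder, since $\bbx'$ is not a specific codeword.) Second, you correctly flag the entropy control across peelings as delicate, and it genuinely is: when you condition $q_t$ on the complement of $B(v_t^*,np)$, the conditional entropy need not remain $\Omega(\epsilon n)$, because the peeled ball can contain many low-mass atoms whose collective entropy is a large fraction of $H(q_t)$. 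Your bound $M_{t+1}\le 1-\Omega(\epsilon)$ is therefore not maintained automatically.

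The paper's proof takes a different and cleaner route that sidesteps all three issues at once. It draws $m=9/\epsilon$ i.i.d.\ samples from the conditional codeword distribution, and first applies Lemma~3 of~\cite{DJLS} (restated as Lemma~\ref{suff_entropy}) to the \emph{message} random variable $U\mid\bbx_1=\bx_1$, whose entropy exceeds $n\epsilon/4$; this lower-bounds by $(\epsilon/5)^{m-1}$ the probability $E_3$ that all $m$ sampled \emph{messages} are distinct. Conditioned on $E_3$, the $m$ codewords are distinct and share the $\ell$-prefix, so the average-distance Plotkin bound (Lemma~\ref{plotkin}) applied to the suffixes of length $n-\ell=(2p-\epsilon/2)n$ gives $E[d_H(\bbx(1),\bbx(2))\mid\bbx_1=\bx_1,E_3]\le np-n\epsilon/8$; Markov then yields $\pr(d_H\le np\mid E_3)\ge\epsilon/(8p)$. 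Since $(\bbx,\bbx',U,U')$ and $(\bbx(1),\bbx(2),U_1,U_2)$ are equidistributed, $\pr(E_2\mid\bbx_1=\bx_1)\ge\frac{\epsilon}{8p}(\epsilon/5)^{9/\epsilon-1}=\epsilon^{O(1/\epsilon)}$. In short: the paper applies Plotkin on \emph{average} distance to a random sample and pays the ``all distinct'' cost once up front on the message variable, whereas your approach applies Plotkin on \emph{minimum} distance to a greedy covering and never pays that cost at all — which is precisely where the proof breaks.
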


We defer the proofs of Lemmas~\ref{step1_ub} and \ref{main_lem_ub} to the appendix. Now, given those lemmas, we are ready to prove Theorem~\ref{upper_bound}.

\paragraph{\bf{Proof of Theorem~\ref{upper_bound}:}}
First, observe that the event $E_2$ of Lemma~\ref{main_lem_ub} guarantees the success of the wait-push strategy since it ensures that the fake codeword $\bbx'$ drawn randomly from $\ml_{\bx_1}$ after observing the prefix $\bx_1$ according to the conditional distribution of $\bbx$ given $\bbx_1=\bx_1$ has the following two properties. The first property is that the fake codeword $\bbx'$ results from the encoding of a different message $U'\neq U$. The second property is that the Hamming distance between the actual codeword $\bbx$ and the fake codeword $\bbx'$ is at most $pn$ and hence the channel would have enough erasures budget to erase those bits of $\bbx$ that differ from the fake codeword $\bbx'$. Thus, the received sequence $\bby$ after the push phase will make the decoder completely uncertain whether the transmitted message is $U$ or $U'$. Therefore, conditioned on $E_2$, a decoding error will occur with probability at least $1/2$.
Thus, to prove our theorem, it suffices to show that $\pr(E_2)$ is strictly positive constant that does not depend on $n$. To do this, we derive a lower bound on $\pr(E_1, E_2)$ where $E_1$ is the event of Lemma~\ref{step1_ub}, namely, the event that the $\ell$-prefix $\bbx_1$ of $\bbx$ lies in the set $A_{\epsilon}$. Observe that, by Lemmas~\ref{step1_ub} and \ref{main_lem_ub}, we have $\pr(E_2)\geq \pr(E_1,E_2)=\sum_{\bx_1\in A_{\epsilon}}\pr\left(E_2~\vert~\bbx_1=\bx_1\right)\pr\left(\bbx_1=\bx_1\right)$ $\geq \epsilon^{O(1/\epsilon)}\pr(E_1)\geq \epsilon^{O(1/\epsilon)}$. This completes the proof of the theorem.

\section{Lower Bound}~\label{LOWER}
In this section, we present a lower bound on $C_p$, denoted by $R_{L}(p)$. As discussed in Section~\ref{system}, we consider the worst-case error scenario where, before transmission starts, an arbitrary message $u\in\mU$ is chosen and is known to the channel (this means that $u$ can actually be chosen by the channel itself before transmission). Our goal is to show, for every $p\in[0,~1/2]$ and every small $\epsilon>0$, the existence of a $(2^{nR},~n)$ code, i.e., an encoder-decoder pair $(\enc,~\dec)$, where $R\geq R_L(p)-\epsilon$, such that for every $\adv\in\ce_p$, the probability of decoding error $P_{max}^n$, defined in (\ref{P_max-e}), can be made arbitrarily small for sufficiently large $n$. 

We will show the existence of such code with a randomized (i.e., stochastic) encoder $\enc$ that takes two inputs. The first input is the message $u\in\mU=\{0,1\}^{nR}$. The second input is a random string $S$ drawn randomly and uniformly from a set $\ms\triangleq\{0,1\}^{\delta n}$ for some small $\delta$ (to be specified later)\footnote{For simplicity of notation, here again we assume, w.l.o.g., that $\delta n$ is an integer.}. The second input to the encoder is generated locally and is not shared with the decoder $\dec$. Hence, we will twist the notation a little bit in this section and write $\enc$ as a two-input function $\enc(u,s),~u\in\mU,~s\in\ms$ to explicitly express the fact that it is a randomized encoder. Our randomized encoder will have a specific form, namely, for an input message $u\in\mU$, the first $Rn$ bits of the encoder's output is the message $u$ itself. Hence, we call it a \emph{systematic} randomized encoder which is formally defined as follows.

\begin{Definition}[Systematic randomized encoder]\label{rand_enc}
Let $R, \delta > 0$, $\mU=\{0,1\}^{Rn}$, and $\ms=\{0,1\}^{\delta n}$. A systematic randomized encoder of a $\left(2^{Rn},~n\right)$ code is a function $\enc:\mU\times\ms\rightarrow\{0,1\}^n$ whose second input is picked uniformly at random from $\ms$. Moreover, for every input $(u,~s)\in\mU\times\ms$, the output of the systematic encoder $\enc(u, s)$ is an $n$-bit codeword $\bx(u,~s)=\left(u,~\bx_2(u, s)\right)\in\mU\times\{0,1\}^{(1-R)n}$.
\end{Definition}

To prove our result, we will use a \emph{random coding argument} in which our systematic randomized encoder $\enc$ is chosen uniformly at random from the class of all systematic randomized encoders. That is, for every $(u,s)\in\mU\times\ms$, the $Rn$-prefix of the output codeword of our encoder is the message $u\in\mU$ while the $(1-R)n$ suffix $\bx_2(u, s)$ is chosen independently and uniformly from $\{0,1\}^{(1-R)n}$. Note that this form of randomness is over the choice of the code is because we adopt a random coding argument as a proof technique and \emph{not} to be confused with the randomness due to the stochastic nature of the encoder, i.e., the randomness due to the uniform choice of $s\in\ms$. That is, we first chose our encoder uniformly at random from the class of encoders satisfying Definition~\ref{rand_enc}, then given a specific choice $\enc$ of our encoder, for every message $u\in\mU$, we sample an $s$ uniformly from $\ms$ and output the codeword $\enc(u,s)$.

The decoder $\dec$ is deterministic function that takes a received vector $\by\in\{0, 1, \wedge\}^n$ as input and returns two outputs: an estimated message $\hat{u}\in\mU$ and also an estimated value for the encoder local randomness $\hat{s}\in\ms$.

We do not have to require that the decoder returns an estimate for the encoder's local randomness since it is not a part of the message. However, by doing so, we actually give a stronger result. We will show that, with high probability, our decoder recovers both the message and the encoder's random coins. Thus, one may think of the message as a pair $(u,~s)$, and the encoder as deterministic. But, correct decoding with high probability is only guaranteed when $s$ is picked uniformly and independently of $u$.

Accordingly, the error criterion, with respect to $(\enc, \dec)$ pair and for a given $\adv\in\ce_p$, is the probability of the worst-case error $P_{max}^n(\enc, \dec, \adv)$ averaged over the set of the encoder's random coins $\ms$. Precisely,
$$P_{max}^n(\enc, \dec, \adv)=\max_{u\in\mU}\frac{1}{2^{\delta n}}\sum_{\by\in\my^n}\pr\left(\adv\left(\enc\left(u,~s\right)\right)=\by\right)\pr\left(\dec(\by)\neq (u,~s)\right).$$
The objective is to show the existence of a $(\enc,~\dec)$ which, for a sufficiently large $n$ and for all $\adv\in\ce_p$, makes $P_{max}^n$ arbitrarily small.

We formally state our lower bound, $R_L(p)$, in the following theorem.
\begin{theorem}\label{lower_bound}
For all $p\in [0,1]$, the rate $R_L(p)$, given below, is achievable for every $\adv\in\ce_p$ and hence $C_p\geq R_L(p)$.
\begin{align}
R_L(p)&=\left\{\hspace{0.5cm}\begin{array}{cc}
          ~1-\frac{p}{\log(4/3)}, & 0\leq p\leq p_1 \\
          r(p), & p_1 < p < 1/2 \\
          0, & 0.5\leq p \leq 1
        \end{array}\right. \nonumber\,,
\end{align}
where $p_1=\frac{3\log(4/3)}{2+3\log(4/3)}\approx 0.384$, the function
$r(p)$ is the unique root $x$ of the equation $G_p(x)=0$ in the interval $[0,~\frac{3}{2}p-\frac{1}{2}]$,
\begin{align}
G_p(x)&=(1-x)H\left(\frac{p-x}{1-x}\right)-1+2x\,,
\end{align}
and $H(.)$ is the binary entropy function.
\end{theorem}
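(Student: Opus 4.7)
}

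I propose to prove Theorem~\ref{lower_bound} via a random coding argument over systematic randomized encoders. Fix $R < R_L(p)$ and a small $\delta > 0$. Draw $\enc$ uniformly at random from Definition~\ref{rand_enc}: for each $(u, s) \in \mU \times \ms$, the suffix $\bx_2(u, s)$ is an independent uniform string in $\{0, 1\}^{(1-R)n}$. The decoder $\dec$ outputs the unique pair $(\hat u, \hat s)$ whose encoding is consistent with the received vector $\by$ on the unerased coordinates, declaring an error otherwise.

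For any worst-case message $u$ and any causal adversary $\adv \in \ce_p$, the decoding error event is the existence of an alternative pair $(u', s') \neq (u, s)$ with $\enc(u', s')$ consistent with $\by$. Let $E_1 = E \cap [1, Rn]$ and $E_2 = E \cap [Rn+1, n]$ denote the adversary's prefix and suffix erasure sets, with $|E_1| + |E_2| \leq pn$. Consistency on the systematic prefix restricts $u'$ to at most $2^{|E_1|}$ candidates (those agreeing with $u$ on the unerased prefix positions), while consistency on the suffix requires $\bx_2(u', s')$ to agree with $\bx_2(u, s)$ on the $(1-R)n - |E_2|$ unerased suffix positions, which by independence of the random suffixes across distinct $(u, s)$ pairs occurs with probability $2^{-((1-R)n - |E_2|)}$.

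The central step is to bound the expected number of such alternative pairs, organized by the adversary's ``wait-erase-push'' parameter $\gamma = |E_1|/n \in [0, \min(R, p)]$. For each $\gamma$, combining the $\le 2^{\gamma n}$ compatible candidates for $u'$, the $2^{\delta n}$ candidates for $s'$, and a binomial tail bound on the probability that $d_H(\bx_2(u,s), \bx_2(u',s')) \leq (p-\gamma)n$ yields an expected count bounded by $2^{n\Phi(\gamma, R, p)}$, where
\[
\Phi(\gamma, R, p) \;=\; \gamma + \delta - (1-R)\bigl(1 - H\bigl(\tfrac{p-\gamma}{1-R}\bigr)\bigr).
\]
Requiring $\Phi(\gamma, R, p) < 0$ uniformly in $\gamma$ yields the rate constraint $R < R_L(p)$. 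The interior stationary point is $\gamma^{\star} = p - (1-R)/3$ (where $H'((p-\gamma^{\star})/(1-R)) = 1$, i.e., $(p-\gamma^{\star})/(1-R) = 1/3$), and using the identity $4/3 - H(1/3) = \log(4/3)$ one obtains the linear bound $R < 1 - p/\log(4/3)$, valid precisely when $\gamma^{\star} \in [0, R]$, equivalently when $p \leq p_1$. For $p > p_1$, the constraint $\gamma \leq R$ binds, and substituting $\gamma = R$ into $\Phi < 0$ reduces exactly to $G_p(R) < 0$, i.e., $R < r(p)$. A discretized union bound over the $O(n)$ relevant values of $\gamma$, combined with Markov's inequality, then establishes the claimed achievability for each worst-case $u$.

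The main obstacle is ensuring that this wait-erase-push analysis captures the full space of causal adversary strategies. A general causal adversary may adaptively interleave erasures throughout the transmission, with the precise locations of $E_1$ and $E_2$ depending online on the random suffix $\bx_2(u,s)$; the induced strategy space is doubly exponential in $n$, far beyond what a direct union bound can handle. My plan is to argue that, without loss of generality, any successful adversarial strategy projects onto a wait-erase-push strategy with comparable success probability: for erasures, the adversary's causal information is essentially summarized by its cumulative allocation of erasures between the prefix (controlling ambiguity in $u'$) and the suffix (controlling the probability that $\bx_2(u',s')$ happens to align with $\bx_2(u,s)$ on unerased coordinates), so any fine-grained interleaving does no better than the best wait-erase-push with the same budget split. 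Establishing this projection cleanly, and carefully handling the joint dependence of $|E_2|$ on the random suffix $\bx_2(u,s)$ (so that the binomial tail bound above remains valid), is the principal technical novelty of the lower bound.
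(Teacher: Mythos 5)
Your core random-coding calculation is essentially the same as the paper's: you count alternative pairs $(u',s')$ via what the paper calls the \emph{forbidden ball} $B_{R}^{p,q}$ (product of the consistent-prefix cube with the Hamming ball of radius $(p-q)n$ on the suffix), arrive at the same exponent $\Phi$, and the two-case optimization over $q$ --- interior stationary point at $(p-q^\star)/(1-R)=1/3$ using $\tfrac43 - H(\tfrac13)=\log(4/3)$, boundary $q=R$ giving $G_p(R)<0$ --- is exactly how the paper derives the two branches of $R_L(p)$ (see the proof of its Lemma on the forbidden-ball size). So the derivation of the rate formula is right and matches the paper.

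Where your proposal diverges, and where it has a real gap, is the final paragraph. You correctly identify that one must handle the full space of adaptive causal adversaries, but the ``projection onto wait-erase-push'' argument you sketch is not how the paper does it, and it is also unnecessary. The paper sidesteps the issue entirely by reducing to a \emph{strictly stronger} adversary, the two-step model: the channel (which already knows $u$, since the message is fixed and worst-case) first commits its prefix erasures, and then, \emph{after seeing the entire codeword including the whole suffix}, chooses its suffix erasures with full non-causal knowledge. Every causal $p$-bounded adversary is dominated by some two-step $p$-bounded adversary --- for a systematic code, prefix erasure decisions made at times $\le Rn$ cannot depend on suffix bits anyway, and the two-step adversary is given strictly more information when it picks suffix erasures. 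Proving achievability against the two-step adversary therefore gives achievability against $\ce_p$ for free. Once you adopt this reduction, your worry about the ``joint dependence of $|E_2|$ on $\bx_2(u,s)$'' dissolves: the forbidden-ball condition $d_H(\bx_2(u,s),\bx_2(u',s'))\le (p-q)n$ already quantifies over \emph{all} legal choices of $E_2$ simultaneously, so there is no adaptivity left to track on the suffix side. There is no need to argue that adaptive interleaving ``does no better than'' wait-erase-push --- you simply give the adversary everything and still win. Your proposed projection lemma is the hard way around a problem the paper avoids by changing the model.

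Two smaller points worth flagging. First, your ``discretized union bound over the $O(n)$ relevant values of $\gamma$, combined with Markov's inequality'' is too weak as stated: for each $\gamma$ the adversary also chooses \emph{which} $\gamma n$ prefix positions to erase, so you must control, with a single random code, every resulting $\by_1\in\{0,1,\wedge\}^{Rn}$ with at most $\min(R,p)n$ erasures of $u$. The paper does this with a Chernoff concentration bound over the random suffixes, giving doubly-exponentially small failure probability per $(u,\by_1)$, and then a union bound over all such pairs; plain Markov plus a union over $\gamma$ does not survive this union bound. Second, your analysis folds in the case $u'=u$, $s'\neq s$, which the paper treats separately as Type-II errors via a Gilbert--Varshamov-style distance argument and then prunes the bad $s$'s out of the code; either treatment works since the Type-II constraint $R<1-2p$ is strictly weaker than $R<R_L(p)$, but you should make the pruning (or the $u'=u$ counting) explicit rather than implicit.
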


To prove Theorem~\ref{lower_bound}, we will show that our result holds for a setting stronger than the causal setting, namely, the \emph{two-step} model that is analogous to the two-step model for bit flips considered in \cite{HL}, and hence it must hold for $\ce_p$. In the two-step model, the transmission of a codeword occurs in two steps. In the first step, the transmitter sends the first $Rn$ bits of the codeword, i.e., the message $u$. The channel, which already knows these $Rn$ bits since the message is fixed, erases some of those bits. Then, in the second step, the transmitter sends the remaining $(1-R)n$ bits of the codeword, i.e. the suffix $\bx_2$, and the channel, which now sees the whole codeword, erases some of the last $(1-R)n$ bits. The total number of bits the channel can erase in the two steps together is at most $pn$.

The proof relies on the notion of the \emph{forbidden ball}. For a given $n$-bit input $\bx$ to the two-step channel and a given erasure pattern chosen by the channel in \emph{the first step}, the forbidden ball is a subset of $\{0,1\}^n$ that contains every $n$-bit string $\bx'$ for which there is a legitimate erasure pattern that the channel can choose in the second step such that, upon observing the whole output of the channel, $\bx'$ and $\bx$ will be equally likely to be the input to the channel. To clarify, suppose that $\bx=(u,~\bx_2)\in\mU\times\{0,1\}^{(1-R)n}$ is the input to the channel in the two-step model. By the end of the first step, the channel decides to erase, say, $qn$ bits in the prefix $u$ where $q\leq \min(p, R)$ resulting in a vector $\by_1\in\{0,1,\wedge\}^{Rn}$. We say that a vector $\bx'\in\{0,1\}^m$ is \emph{consistent with} $\by\in\{0,1,\wedge\}^m$ if $x'_i=y_i$ for all $i$ such that $y_i\neq\wedge$, where $x'_i$ (resp., $y_i$) is the $i$th bit of $\bx'$ (resp., $\by$), that is, if $\bx'$ and $\by$ agree in every non-erased entry. We denote the number of erasures in a vector $\by$ by $\sharp(\by)$. Now, consider the intermediate $n$-bit vector $(\by_1,~\bx_2)$ right after the action of the channel in the first step and before the second step. For $q=\frac{\sharp(\by_1)}{n}$, the \emph{forbidden ball} $B_{R}^{p,q}(\by_1,\bx_2)$ centered at $(\by_1,~\bx_2)$ defined as
\begin{align}
&B_{R}^{p,q}(\by_1,\bx_2)=\big\{\bx'_1\in\{0,1\}^{Rn}:~\bx'_1~\text{ consistent with }\by_1\big\}\times\big\{\bx'_2\in\{0,1\}^{(1-R)n}:~d_H(\bx'_2,\bx_2)\leq(p-q)n\big\}\nonumber
\end{align}
Note that the forbidden ball $B_{R}^{p,q}(\by_1,~\bx_2)$ is the product set of a Hamming cube with a Hamming ball. This set contains all vectors $\bx'\in\{0,1\}^n$ such that it is possible for the channel to erase bits in the second step (given that it already erased $qn$ bits in the first step resulting in $\by_1$) to make the receiver believe that $\bx'$ was a possible input string to the channel. 

Clearly, the original input $\bx=(\bx_1,\bx_2)$ lies in $B_{R}^{p,q}(\by_1,~\bx_2)$. If $\bx$ is a codeword and is the only codeword lying in the forbidden ball, then, in this case, the decoder can recover the original message successfully with no error. Our goal, roughly speaking, is to show the existence of a code where this holds for ``most'' of the codewords. Using a random coding argument, one can show that, roughly speaking, a ``good'' $(2^{Rn}, n)$ code that achieves a rate $R$ in the two-step model exists when the size of any such forbidden ball is smaller than $2^{(1-R)n}$ by an exponential factor. To do this, a crucial step in the existence proof is to characterize the size of such a ball. Note that the size of $B_{R}^{p,q}(\by_1,~\bx_2)$ does not depend on $(\by_1,~\bx_2)$. Hence, we will use $B_{R}^{p,q}$ to denote the size of $B_{R}^{p,q}(\by_1,~\bx_2)$. Lemma~\ref{forbid_ball_size} below gives an upper bound on $B_{R}^{p,q}$ for any $p\in[0,~1/2]$ and $q\in[0,~\min(R, p)]$ when $R$ is carefully chosen. Before stating this lemma, we first give the following definition.
\begin{Definition}\label{def_R_delta_eta}
For every $p\in(0, 1/2)$ and every $\delta,~\eta>0$, define $R_{\delta, \eta}(p)$ as
\begin{align}
&\hspace{-.05cm}R_{\delta, \eta}(p)\hspace{-.1cm}=\hspace{-.1cm}\left\{\hspace{-.1cm}\begin{array}{cc}
          \hspace{-.08cm}1-\frac{p}{\log(4/3)}-\frac{1-\log(4/3)}{\log(4/3)}\delta-\frac{\eta}{\log(4/3)}, & \hspace{-0.12cm}0< p < p_1 \\
          \hspace{-.25cm}r_{\delta, \eta}(p) +\delta, & \hspace{-.18cm}p_1 \leq p < 1/2
        \end{array}\right.\label{R_delta_eta}
\end{align}
where $p_1=\frac{3\log(4/3)}{2+3\log(4/3)}\approx 0.384$ and $r_{\delta, \eta}(p)$ is the unique solution of the equation $G_p(x)+\delta+\eta=0, ~x\in [0,~\frac{3}{2}p-\frac{1}{2}]$ (for $x$) where
\begin{align}
G_p(x)&=(1-x)H(\frac{p-x}{1-x})-1+2x.\label{S_p_of_x}
\end{align}
\end{Definition}

\begin{lemma}\label{forbid_ball_size}
Let $p\in (0,~1/2)$. For all sufficiently small $\delta,~\eta>0$, for
all sufficiently large $n$, if $R = R_{\delta, \eta}(p)-\delta$ (where $R_{\delta, \eta}(p)$ is as in Definition~\ref{def_R_delta_eta}), then for every $q\in[0,~\min(R, p)]$, the size $B_{R}^{p,q}$ of the forbidden ball is bounded as
\begin{align}
B^{p, q}_{R}&\leq 2^{\left(1-R_{\delta, \eta}(p)-\eta/2\right)n}\,.
\end{align}
\end{lemma}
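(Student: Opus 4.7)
The strategy is to reduce the bound on $B_R^{p,q}$ to a one-dimensional optimization problem and then solve it by case analysis on $p$, using the explicit form of $R_{\delta,\eta}(p)$ in each regime. The defining product structure gives
$$B_R^{p,q} = 2^{qn}\cdot V_{(1-R)n}\bigl((p-q)n\bigr),$$
where $V_m(k)$ denotes the binary Hamming-ball volume (so $B_R^{p,q}$ is indeed independent of the center $(\by_1,\bx_2)$, as claimed). Applying the standard entropy estimate $V_m(k)\leq 2^{mH(k/m)}$ when $k/m\leq 1/2$ and the trivial bound $V_m(k)\leq 2^m$ otherwise yields $B_R^{p,q}\leq 2^{nf(q)}$ with
$$f(q) := q + (1-R)\,\tilde H\!\left(\tfrac{p-q}{1-R}\right),\qquad \tilde H(t) := H(\min(t, 1/2)).$$
Substituting $R = R_{\delta,\eta}(p)-\delta$, it suffices to establish $\max_{q\in[0,\min(R,p)]} f(q) \leq 1 - R - \delta - \eta/2$.

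Next I analyze the shape of $f$. Set $q_b := p-(1-R)/2$ and $q^* := p-(1-R)/3$. On $[0,q_b]$, $f(q) = q+(1-R)$ is linear and increasing; on $[q_b,p]$, $f'(q) = 1 - H'\!\bigl(\tfrac{p-q}{1-R}\bigr)$, and since $H'(1/3)=1$, $f$ is increasing on $[q_b,q^*]$ and decreasing on $[q^*,p]$, with matching value at $q_b$. Hence $f$ is unimodal with peak at $q^*$, and its maximum over $[0,\min(R,p)]$ is either $f(q^*)$ (if $q^*\leq\min(R,p)$) or $f(\min(R,p))$ (if $q^*>\min(R,p)$).

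The proof is completed by case analysis matching the two regimes of $R_{\delta,\eta}(p)$. For $p<p_1$, the explicit formula for $R$ places $q^*\in[0,\min(R,p)]$, so the peak is interior; using $H(1/3)-1/3 = \log_2(3/2) = 1-\log_2(4/3)$ one computes $f(q^*) = p + (1-R)\bigl(1-\log_2(4/3)\bigr)$. Plugging in $R = 1 - (p+\delta+\eta)/\log_2(4/3)$ (obtained from Definition~\ref{def_R_delta_eta} after subtracting $\delta$) collapses the required inequality to $\eta/2 \leq \eta$. For $p\geq p_1$, Definition~\ref{def_R_delta_eta} places $R = r_{\delta,\eta}(p) \in [0,(3p-1)/2]$, which immediately forces $q^*\geq R$ and $R\leq p$, so the maximum is at $q=R$ with value $f(R) = R + (1-R)H\!\bigl(\tfrac{p-R}{1-R}\bigr) = G_p(R) + 1 - R$. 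The defining equation $G_p(R) = -\delta-\eta$ then yields $f(R) = 1 - R - \delta - \eta$, strictly better than required.

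The main obstacle is the bookkeeping ensuring the two pieces of $R_{\delta,\eta}(p)$ patch together at $p_1$: the constant $p_1 = 3\log(4/3)/(2+3\log(4/3))$ is chosen precisely so that $R = (3p-1)/2$ there, which is exactly the boundary between the ``interior-peak'' and ``boundary-peak'' regimes of $f$. The small additive corrections involving $\delta$ and $\eta$ must be tracked so that the two cases remain consistent and neither $q^*$ nor $R$ crosses out of its designated regime under the perturbation; once this calibration is checked, the rest is elementary algebra with the binary entropy function.
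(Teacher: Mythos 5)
Your proposal is correct and takes essentially the same route as the paper: express $B_R^{p,q}$ as $2^{qn}$ times a Hamming-ball volume, bound the exponent by $f(q) = q + (1-R)H\bigl(\tfrac{p-q}{1-R}\bigr)$, and maximize over $q$, with the unique interior critical point at $q^* = p - (1-R)/3$ when $p<p_1$ and the boundary $q=\min(R,p)=R$ binding when $p\geq p_1$, exactly matching the two branches of Definition~\ref{def_R_delta_eta}. Your use of $\tilde H$ to patch the regime $\tfrac{p-q}{1-R}>1/2$ is a slightly cleaner treatment of an edge case the paper glosses over, but the core calculation and case split are identical.
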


The following claim will be used later to complete the proof of our main result.
\begin{claim}\label{limit_rate}
For every $p\in(0, 1/2)$, the quantity $R_{\delta, \eta}(p)$, defined in Definition~\ref{def_R_delta_eta}, satisfies
\begin{align}
\lim_{\delta+\eta\rightarrow 0}R_{\delta, \eta}(p)&=R_L(p)
\end{align}
where $R_L(p)$ is as given by Theorem~\ref{lower_bound}.
\end{claim}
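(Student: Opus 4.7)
My plan is to handle the two pieces of the piecewise definition of $R_{\delta,\eta}(p)$ separately. For $p \in (0, p_1)$, $R_{\delta,\eta}(p)$ is the linear expression
\[
1 - \frac{p}{\log(4/3)} - \frac{1-\log(4/3)}{\log(4/3)}\,\delta - \frac{\eta}{\log(4/3)},
\]
which is jointly continuous in $(\delta,\eta)$, so its limit as $\delta + \eta \to 0$ is exactly $1 - p/\log(4/3) = R_L(p)$. This piece is immediate.

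For $p \in [p_1,\, 1/2)$, I write $R_{\delta,\eta}(p) = r_{\delta,\eta}(p) + \delta$; since the additive $\delta$ drops out, it suffices to show $r_{\delta,\eta}(p) \to r(p)$ as $\delta + \eta \to 0$. Both points live in the compact interval $I_p \triangleq [0,\, \tfrac{3}{2}p - \tfrac{1}{2}]$, and $G_p$ is continuous on $I_p$ because $\tfrac{p-x}{1-x}$ maps $I_p$ into $[1/3,\,p] \subset (0,1)$, the open interval on which $H$ is continuous. I would then use a standard compactness-plus-uniqueness argument: for any sequence $(\delta_k,\eta_k) \to (0,0)$, the roots $r_{\delta_k,\eta_k}(p) \in I_p$ admit a convergent subsequence with limit $x^\ast \in I_p$, and by continuity $G_p(x^\ast) = -\lim_k(\delta_k+\eta_k) = 0$. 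Uniqueness of the root of $G_p = 0$ in $I_p$ (assumed by Theorem~\ref{lower_bound}) forces $x^\ast = r(p)$, and since every convergent subsequence has the same limit, the whole net $r_{\delta,\eta}(p)$ converges to $r(p)$.

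The main (and essentially only) subtle step is the matching at the boundary $p = p_1$: the definition assigns $p_1$ to the \emph{first} piece of $R_L$ but to the \emph{second} piece of $R_{\delta,\eta}$, so the claim requires verifying $r(p_1) = 1 - p_1/\log(4/3)$. I would evaluate $G_{p_1}$ at the right endpoint $x_0 = \tfrac{3}{2}p_1 - \tfrac{1}{2}$: since $\tfrac{p_1 - x_0}{1 - x_0} = 1/3$, one gets $G_{p_1}(x_0) = \tfrac{3}{2}(1-p_1)H(1/3) + 3p_1 - 2$, and then substituting the identity $H(1/3) = 4/3 - \log(4/3)$ together with the definition $p_1 = 3\log(4/3)/(2 + 3\log(4/3))$ shows after a short algebraic simplification that $G_{p_1}(x_0) = 0$. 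Hence $r(p_1) = x_0 = \tfrac{3}{2}p_1 - \tfrac{1}{2}$, and a second algebraic manipulation (rewriting $x_0$ over the common denominator $2 + 3\log(4/3)$) shows this equals $1 - p_1/\log(4/3) = R_L(p_1)$, as required. Everything else is routine continuity on a compact interval; the only real bookkeeping is this endpoint consistency check.
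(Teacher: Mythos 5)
Your proof is correct and slightly more thorough than the paper's. For $p\in(0,p_1)$ both you and the paper dispatch the linear piece by continuity. For $p\in[p_1,1/2)$ the paper takes a marginally different route: instead of your compactness-and-uniqueness argument, it observes that $G_p$ is continuous and strictly increasing on the interval, defines the inverse $G_p^{-1}$, notes that $r_{\delta,\eta}(p) = G_p^{-1}(-(\delta+\eta))$, and invokes continuity of $G_p^{-1}$ to conclude $r_{\delta,\eta}(p)\to G_p^{-1}(0)=r(p)$. The two arguments are essentially interchangeable; yours is slightly more robust in that it works on the \emph{closed} interval $[0,\tfrac32 p-\tfrac12]$, whereas the paper works on the half-open interval $[0,\tfrac32 p-\tfrac12)$, where $0$ is attained as an endpoint limit (and, at $p=p_1$ exactly, is not in the interior of the range at all), so the paper's invocation of $G_p^{-1}(0)$ technically requires extending to the closure.

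More substantively, you identified a real gap that the paper's proof glosses over: at the shared boundary $p=p_1$, the claim equates the limit (which the compactness/inverse-function argument shows is $r(p_1)$) with $R_L(p_1)$, which is defined via the \emph{first} branch $1-p_1/\log(4/3)$, since Theorem~\ref{lower_bound} assigns $p_1$ to that branch while Definition~\ref{def_R_delta_eta} assigns it to the second. The paper never verifies $r(p_1)=1-p_1/\log(4/3)$; you do, and your computation (using $H(1/3)=4/3-\log(4/3)$ and the definition of $p_1$) is correct: $G_{p_1}(\tfrac32 p_1-\tfrac12)=0$ and $\tfrac32 p_1-\tfrac12=\tfrac{3\log(4/3)-1}{2+3\log(4/3)}=1-p_1/\log(4/3)$. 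So your proof is complete and closes a small hole in the paper's argument.
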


To prove Theorem~\ref{lower_bound}, we consider two ways in which a decoding error can occur in the two-step model. The first is when the decoder decides that the true codeword is one whose $R n$-prefix is different from that of the originally transmitted codeword. This is tantamount to having an erroneous estimate for the message $u\in\mU$ at the decoder's output since the first $Rn$ bits of our encoder's output is the message $u$. The second type of a decoding error is when the decoder believes that the true codeword is one that shares the same $R n$-prefix as the originally transmitted codeword (hence resulting in a correct estimate for the message $u\in\mU$) but a different suffix from the original codeword (hence resulting in the wrong estimate for $s\in\ms$). We refer to the former type of errors as \emph{type-I errors} while we refer to the later as \emph{type-II errors}. Note that, if we are not interested in estimating $s\in\ms$, then type-II errors are irrelevant and we can definitely ignore them. However, pursuing a stronger result, we show the existence of a code of rate $R_L(p)$ that is capable of correcting both types of errors with probability approaching $1$ as $n\rightarrow\infty$.


Let $S$ be a random variable that is uniformly distributed over $\ms$ and let $u\in\mU$. Suppose that $(u,~S)$ is the (message, random coins) pair. Let $\left(u, \bx_2(u,S)\right)\in\{0,1\}^n$ be the transmitted codeword. In the first step, the channel erases $qn$ bits of the prefix $u$ resulting in the vector $\by_1\in\{0,1,\wedge\}^{R n}$ for some $q\in[0,~\min(R,p)]$. In the second step, the channel erases at most $(p-q)n$ bits of $\bx_2(u, S)$. An error occurs, either of type-I or type-II, when the forbidden ball $B_{R}^{p,q}\left(\by_1, \bx_2(u, S)\right)$ contains at least one legitimate codeword $(u', \bx_2')$ other than $\left(u, \bx_2(u,~S)\right)$.

\subsection{Type-I Errors:}\label{typeI}
Here, we study the case where there is at least one other codeword $(u', \bx_2')$ in $B_{R}^{p,q}\left(\by_1, \bx_2(u, S)\right)$ such that $u'\neq u$. In other words, a type-I error occurs, with respect to $(u,~\by_1)$, whenever there exists a message $u'\neq u$ and some $s'\in\ms$ whose corresponding codeword $\left(u', \bx_2(u',s')\right)$ lies inside the forbidden ball $B_{R}^{p,q}\left(\by_1, \bx_2(u, S)\right)$. Let us denote this event by $\ei$. Formally, the event $\ei$ is defined as
\begin{align}
\ei&\triangleq \big\{\exists (u',s')\in\left(\mU\setminus\{u\}\right)\times\ms:\enc(u',s')\in B_{R}^{p,q}\left(\by_1,~\bx_2(u, S)\right)\big\}\label{ei}
\end{align}

In Definition~\ref{good_code_def} below, we define a property that, if possessed by a systematic code, would lead to a vanishing probability of type-I errors.

%

\begin{Definition}[$\tilde{\eta}$-good systematic code w.r.t. $(u, \by_1)$]\label{good_code_def}
Fix $R,~\delta > 0$ and let $\mU=\{0,1\}^{Rn},$ and $\ms=\{0,1\}^{\delta n}$. Let $p\in[0,~1/2]$ and fix some $q\in[0,~\min(R, p)]$. Let $S$ be a random string that is uniformly distributed over $\ms$. Fix $u\in\mU$ and let $\by_1\in\{0,1,\wedge\}^{R n}$ be the resulting vector after erasing some $qn$ bits of $u$. Let $\tilde{\eta}>0$. A $\left(2^{Rn}, n\right)$ code is a $\tilde{\eta}$-good systematic code with respect to $(u,~\by_1)$ if it is associated with a systematic encoder $\enc$ (as defined in (\ref{rand_enc})) such that
\begin{align}
\peI\triangleq\pr\left(\ei\right)&\leq 2^{-\tilde{\eta}n}\nonumber
\end{align}
where $\ei$ is as defined in (\ref{ei}) and the probability is over the choice of $S$.
\end{Definition}

The next lemma shows that, in a two-step model with erasure rate $p\in[0, 1/2]$, a systematic code, with rate arbitrarily close to our claimed lower bound $R_{L}(p)$, whose codewords' suffixes are chosen uniformly at random is $\tilde{\eta}$-good (for some fixed $\tilde{\eta}>0$) \emph{with respect to all pairs} $(u, \by_1)$ with overwhelming probability for sufficiently large $n$. This asserts the existence of at least one code for the two-step model which is $\tilde{\eta}$-good with respect to all pairs $(u, \by_1)$ which implies the existence of a code of rate arbitrarily close to $R_L(p)$ that can correct all type-I errors with probability arbitrarily close to $1$ for sufficiently large $n$.

\begin{lemma}\label{good_rand_code}
Let $p\in[0,~1/2]$. For sufficiently small $\delta, \eta >0$, let $R=R_{\delta, \eta}(p)-\delta$ where $R_{\delta, \eta}(p)$ is as in Definition~\ref{def_R_delta_eta}. Let $\mU=\{0,1\}^{Rn}$ and $\ms=\{0,1\}^{\delta n}$. Let $\enc:\mU\times\ms\rightarrow\{0,1\}^n$ be a systematic randomized encoder (as defined in Definition~\ref{rand_enc}) such that, for every $u\in\mU$, $s\in\ms$, $\bx_2(u, s)$ is chosen independently and uniformly from $\{0,1\}^{(1-R)n}$. With probability at least $1-e^{-2^{\Omega(n)}}$ over the choice of $\enc$, the code associated with $\enc$ is $\frac{\eta}{4}$-good with respect to all pairs $(u, \by_1)\in\mU\times\{0,1,\wedge\}^{R n}$ where $\by_1$ is the resulting vector after erasing at most $\min(R, p)n$ bits of $u$.
\end{lemma}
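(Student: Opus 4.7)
The plan is to fix a single pair $(u, \by_1)$ with $q = \sharp(\by_1)/n \in [0, \min(R,p)]$, establish a doubly-exponential-in-$n$ bound on the probability (over $\enc$) that the code fails to be $\frac{\eta}{4}$-good for this pair, and then union-bound over all such pairs. The total number of eligible pairs is $|\mU|$ times the number of erasure patterns of size at most $\min(R,p) n$ in $Rn$ positions, which is at most $2^{Rn} \cdot 2^{Rn} = 2^{O(n)}$. A pointwise failure bound of the form $e^{-2^{\Omega(n)}}$ therefore comfortably survives this union bound.

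Fix such a pair. The central trick is to condition on the random suffixes of \emph{all other} messages, $\Xi \triangleq \{\bx_2(u', s'') : u' \in \mU \setminus \{u\},\; s'' \in \ms\}$. Given $\Xi$, define the ``bad target set''
\begin{equation*}
T \;\triangleq\; \bigcup_{\substack{u' \neq u,\\ u' \text{ consistent with } \by_1}} \;\; \bigcup_{s'' \in \ms} \bigl\{ z \in \{0,1\}^{(1-R)n} : d_H(z,\,\bx_2(u',s'')) \leq (p-q)n \bigr\}.
\end{equation*}
Unpacking the definition of the forbidden ball, the event defined in (\ref{ei}) for a fixed value $s$ of the encoder's random coin is precisely $\{\bx_2(u,s) \in T\}$. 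Since the strings $\{\bx_2(u,s)\}_{s \in \ms}$ are i.i.d.\ uniform on $\{0,1\}^{(1-R)n}$ and independent of $\Xi$, the indicators $X_s \triangleq \mathbf{1}\{\bx_2(u,s) \in T\}$ are, conditional on $\Xi$, i.i.d.\ Bernoulli with parameter $p^{*} = |T|/2^{(1-R)n}$. A union-bound count gives $|T| \leq 2^{qn} \cdot 2^{\delta n} \cdot V$ where $V$ is the volume of a Hamming ball of radius $(p-q)n$ in $\{0,1\}^{(1-R)n}$. Since $B_R^{p,q} = 2^{qn} \cdot V$, Lemma~\ref{forbid_ball_size} together with the choice $R = R_{\delta, \eta}(p) - \delta$ yields the uniform (worst-case-over-$\Xi$) bound $p^{*} \leq 2^{-\eta n / 2}$.

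I then apply the multiplicative Chernoff bound to $\sum_{s} X_s \sim \text{Bin}(|\ms|, p^{*})$ (conditional on $\Xi$) with threshold $a = |\ms| \cdot 2^{-\eta n/4} = 2^{\delta n - \eta n/4}$. Since $a \geq 2^{\eta n/4} \cdot \mathbb{E}\bigl[\sum_s X_s\bigr]$, Chernoff gives
\begin{equation*}
\pr\!\Bigl[\,\peI > 2^{-\eta n/4} \;\Big|\; \Xi\Bigr] \;\leq\; \Bigl(\tfrac{e \cdot |\ms|\, p^{*}}{a}\Bigr)^{a} \;\leq\; \bigl(e \cdot 2^{-\eta n/4}\bigr)^{2^{\delta n - \eta n/4}} \;\leq\; e^{-2^{\Omega(n)}},
\end{equation*}
where the last step uses the (mild) condition $\delta > \eta/4$, compatible with both constants being arbitrarily small. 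Since the right-hand side does not depend on the realization of $\Xi$, the bound holds unconditionally, and a union bound over the $2^{O(n)}$ pairs $(u, \by_1)$ then completes the proof.

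The main obstacle I anticipate is obtaining \emph{doubly}-exponential concentration. A naive Markov inequality applied to the first-moment bound $\mathbb{E}_{\enc}[\peI] \leq 2^{-\eta n/2}$ only yields $\pr_{\enc}[\peI > 2^{-\eta n/4}] \leq 2^{-\eta n/4}$, a singly-exponential bound that cannot absorb a $2^{O(n)}$ union bound. The ``conditioning on $\Xi$'' step is the key decoupling: after freezing the random strings indexed by $u' \neq u$, the indicators $X_s$ for varying $s$ depend on $s$ only through the independent uniform strings $\bx_2(u, s)$, producing the conditional i.i.d.\ structure that Chernoff needs.
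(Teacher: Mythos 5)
Your proof is correct, and it streamlines the paper's argument at the crucial decoupling step. Both proofs fix a pair $(u, \by_1)$, condition on the suffixes $\sufx(u)$ of the codewords for all $u' \neq u$, invoke Lemma~\ref{forbid_ball_size} to control the forbidden-ball size, apply a Chernoff bound to the $2^{\delta n}$ indicators over $s \in \ms$, and finish with a union bound over pairs. The difference is in how the conditional Bernoulli parameter $p^{*}$ is controlled. The paper first defines a ``well-behaved'' set $\wellb$ of realizations of $\sufx(u)$ on which $\sum_{\bz}L_{\bz}(\by_1,\sufx(u)) \leq 2^{n\delta+2}B^{p,q}_{R}$, runs a separate Chernoff--Hoeffding argument over the variables $V_{u',s'}$ to show $\pr(\sufx(u)\in\wellb)\geq 1-e^{-2^{\Omega(n)}}$, and then conditions on $\wellb$. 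You instead observe that the bad target set $T$ admits a \emph{deterministic} bound $|T|\leq 2^{qn}\cdot 2^{\delta n}\cdot V = 2^{\delta n}B^{p,q}_{R}$ (a union bound over the at most $2^{qn}$ prefixes $u'\neq u$ consistent with $\by_1$ and the $2^{\delta n}$ values of $s''$), valid for \emph{every} realization of the conditioning, so that $p^{*}\leq 2^{-\eta n/2}$ holds worst-case. This eliminates the $\wellb$ machinery entirely and even slightly tightens the constant; in fact it highlights that, in the systematic setting, each $V_{u',s'}$ is deterministic -- equal to the Hamming-ball volume if $u'$ is consistent with $\by_1$ and $0$ otherwise -- so the paper's first Chernoff--Hoeffding application buys nothing over a plain counting bound. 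The one condition you (correctly) flag, that the doubly-exponential concentration requires $\delta$ not too small relative to $\eta$ (you use $\delta>\eta/4$), is implicitly needed in the paper's proof as well and is consistent with Claim~\ref{limit_rate}, which only requires $\delta+\eta\to 0$ jointly.
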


\subsection{Type-II Errors:}\label{typeII}

Here, we consider the case where the decoder outputs the correct $u\in\mU$ but the wrong $s\in\ms$. In other words, when $(u, s)\in\mU\times\ms$ is the (message, random coins) pair, we consider the error event that occurs when the decoder confuses the actual codeword $\left(u, \bx_2(u,s)\right)$ with some other codeword $(u, \bx_2')$ for some $\bx_2'\neq\bx_2(u,s)$. Our goal is to show that, when our systematic encoder $\enc$ is such that, for every $u\in\mU$ and $s\in\ms$, $\bx_2(u, s)$ is chosen uniformly at random from $\{0,1\}^{(1-R)n}$, then with an overwhelming probability over the choice of $\enc$, such error event does not occur for ``almost'' all $(u,s)\in\mU\times\ms$.

Roughly speaking, we need to show that $\enc$ has the property that, except for a few pairs of codewords, any pair of codewords that share the same prefix (i.e., that correspond to the same message $u\in\mU$) are not very close to each other in the Hamming distance. More precisely, for $p\in[0, 1/2)$, except for a small subset of codewords, any pair of codewords that share the same prefix will, with high probability (over the choice of the code), be at Hamming distance greater than $pn$. In fact, the following lemma gives us what we are looking for. The following lemma is closely related to Lemma~III.4 in \cite{HL}. The proof of the following lemma follows from the standard distance argument in the proof of the GV bound. The proof is omitted since it follows similar steps to that of Lemma~III.4 in \cite{HL}.

\begin{lemma}\label{errors_for_same_pefix_lem}
Let $p\in[0, 1/2)$, $0< R < 1-2p$ and let $\delta>0$ be sufficiently small. Let $\enc:\mU\times\ms\rightarrow\{0,1\}^n$ be a systematic encoder (as in Definition~\ref{rand_enc}) such that, for every $u\in\mU$ and $s\in\ms$, $\bx_2(u, s)$ is chosen independently and uniformly from $\{0,1\}^{(1-R)n}$. There exists a $\gamma>0$ for which the following holds for all sufficiently large $n$. With probability at least $1-e^{-2^{\Omega(n)}}$ over the choice of the encoder $\enc$, a code associated with $\enc$ satisfies the following: There exists a set $\mathcal{V}\subset\mU$ with $\vert\mathcal{V}\vert\leq 2^{(R-\gamma)n}$ such that for every $u\in\mU\setminus\mathcal{V}$, there exists $\mathcal{Q}_{u}\subseteq\ms$ of size $\vert\mathcal{Q}_{u}\vert<2^{(\delta-\gamma)n}$ such that for every distinct $s, s'\in\ms\setminus\mathcal{Q}_{u},$ we have $d_H\left(\bx_2(u,s), \bx_2(u,s')\right)> pn$.
\end{lemma}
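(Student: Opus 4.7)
The plan is to execute a two-level expurgation (nested random coding) argument, exploiting the fact that the suffixes $\bx_2(u,\cdot)$ for different messages $u$ are drawn mutually independently.

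I would first establish the basic per-pair estimate: fix any $u\in\mU$ and distinct $s,s'\in\ms$. By the definition of the encoder, $\bx_2(u,s)$ and $\bx_2(u,s')$ are independent uniform bit-strings of length $(1-R)n$, so their Hamming distance is distributed as $\mathrm{Bin}((1-R)n,\,1/2)$. Since $R<1-2p$ forces $p/(1-R)<1/2$, the standard Hamming-ball volume bound yields
\[
\Pr\bigl[d_H(\bx_2(u,s),\,\bx_2(u,s'))\leq pn\bigr] \leq 2^{-\alpha n}, \qquad \alpha := (1-R)\bigl(1-H(p/(1-R))\bigr) > 0.
\]

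Next I would expurgate \emph{inside} each fixed message. Let $N_u$ count the unordered pairs $\{s,s'\}$ whose suffix codewords fall within Hamming distance $pn$. Linearity gives $E[N_u]\leq\tfrac12\cdot 2^{(2\delta-\alpha)n}$. Call $u$ \emph{bad} if $N_u\geq \tfrac12\cdot 2^{(\delta-\gamma)n}$; Markov's inequality then gives $\Pr[u\text{ bad}]\leq q:=2^{(\delta-\alpha+\gamma)n}$. For a non-bad $u$, take $\mathcal{Q}_u$ to be the set of all $s\in\ms$ appearing in at least one bad pair; since the endpoints of the $N_u$ bad ``edges'' form a set of size at most $2N_u$, we have $|\mathcal{Q}_u|\leq 2N_u<2^{(\delta-\gamma)n}$, and by construction every pair of distinct $s,s'\in\ms\setminus\mathcal{Q}_u$ satisfies $d_H(\bx_2(u,s),\bx_2(u,s'))>pn$.

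The final step is to expurgate \emph{across} messages. The key observation is that the families $\{\bx_2(u,\cdot)\}_{u\in\mU}$ are mutually independent over the choice of encoder, so the indicators $\mathbf{1}[u\text{ bad}]$ form an independent Bernoulli family with parameter $\leq q$. Letting $\mathcal{V}$ denote the set of bad messages, the classical Chernoff tail $\Pr[\mathrm{Bin}(N,q)>t]\leq (eNq/t)^t$ with $N=2^{Rn}$ and $t=2^{(R-\gamma)n}$ gives
\[
\Pr\bigl[|\mathcal{V}|>2^{(R-\gamma)n}\bigr]\leq 2^{-2^{(R-\gamma)n}}=e^{-2^{\Omega(n)}},
\]
provided $\alpha>\delta+2\gamma$ (so that $eNq/t\leq 1/2$ for large $n$) and $\gamma<R$ (so that the exponent $R-\gamma$ is positive). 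Both inequalities can be arranged by choosing $\delta$ and then $\gamma$ sufficiently small, since $\alpha>0$ and $R>0$ by hypothesis.

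The step I would expect to be the main obstacle is precisely this last one: upgrading the per-message failure probability to a \emph{doubly}-exponentially small overall failure probability. A naive union bound across $u\in\mU$ of the per-message estimate $q=2^{-\Omega(n)}$ would yield only single-exponential decay, which falls short of the $e^{-2^{\Omega(n)}}$ guarantee in the statement. Independence of the suffix families across $u$ is essential, and it is this structural feature of the code construction that lets the Chernoff bound deliver a double-exponential tail. The remaining bookkeeping---choosing a single $\gamma$ that works simultaneously for bounding $|\mathcal{V}|$ and each $|\mathcal{Q}_u|$---is pure arithmetic in the exponents and poses no real difficulty.
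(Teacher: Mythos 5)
Your proof is correct, and it fills in an argument the paper deliberately omits (the authors simply cite Lemma~III.4 of Haviv--Langberg and describe the idea as ``the standard distance argument in the proof of the GV bound''). Your two-level expurgation is exactly that argument made explicit: the per-pair estimate $\Pr[d_H\le pn]\le 2^{-\alpha n}$ with $\alpha=(1-R)\bigl(1-H(p/(1-R))\bigr)>0$ is precisely the GV-type volume bound, Markov on $N_u$ gives the per-message expurgation set $\mathcal{Q}_u$ (the factor-of-two bookkeeping $|\mathcal{Q}_u|\le 2N_u$ and the threshold $\tfrac12\cdot 2^{(\delta-\gamma)n}$ are handled cleanly), and the mutual independence of $\{\bx_2(u,\cdot)\}_{u\in\mU}$ lets Chernoff upgrade the singly-exponential per-message failure $q=2^{(\delta-\alpha+\gamma)n}$ to a doubly-exponential bound on $|\mathcal{V}|$. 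Your identification of this last step as the crux is exactly right: a union bound over $u$ would only give $2^{Rn}q=2^{-\Omega(n)}$, which is too weak to match the $1-e^{-2^{\Omega(n)}}$ stated in the lemma, and independence across messages is the structural feature of the random systematic encoder that rescues it. The parameter bookkeeping ($\alpha>\delta+2\gamma$ and $\gamma<R$, arrangeable since $\alpha,R>0$ and $\delta$ is taken small) is also correct. One could quibble that $|\mathcal{V}|$ is stochastically dominated by, rather than equal in distribution to, $\mathrm{Bin}(N,q)$ since the per-message probability is only bounded above by $q$; you say this implicitly (``with parameter $\le q$''), and it is fine, but a referee might want that one-line remark spelled out.
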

\subsection{Proof of Theorem~\ref{lower_bound}:}
Let $p\in(0,~1/2)$. Let $\delta,~\eta>0$ be sufficiently small, $R_{\delta, \eta}(p)$ be as in Definition~\ref{def_R_delta_eta}, $R=R_{\delta, \eta}-\delta$, $\mU=\{0,1\}^{Rn}$, and $\ms=\{0,1\}^{\delta n}$. Let $\enc$ be a systematic randomized encoder that satisfies the conditions in Lemmas~\ref{good_rand_code} and \ref{errors_for_same_pefix_lem} simultaneously. Note that the existence of such encoder is guaranteed since, by Lemmas~\ref{good_rand_code} and \ref{errors_for_same_pefix_lem}, the probability that a systematic randomized encoder $\enc$ satisfies the conditions of those lemmas simultaneously is at least $1-e^{-2^{\Omega(n)}}$. Let $\gamma>0$, $\mathcal{V}\subset\mU$, and $\{\mathcal{Q}_{u},~u\in\overline{\mathcal{V}}\}$ be as in Lemma~\ref{errors_for_same_pefix_lem} where $\overline{\mathcal{V}}=\mU\setminus\mathcal{V}$.

Let $\overline{\mathcal{Q}}_u=\ms\setminus\mathcal{Q}_u,~u\in\overline{\mathcal{V}}$. Let's order the members of $\overline{\mathcal{Q}}_u$, say, lexicographically, and denote them by $s_{\overline{\mathcal{Q}}_u}\left(1\right)< ... <s_{\overline{\mathcal{Q}}_u}\left(\vert\overline{\mathcal{Q}}_u\vert\right)$. Let $T$ denote $\min_{u\in\overline{\mathcal{V}}}\vert\overline{\mathcal{Q}}_u\vert$. Hence, by Lemma~\ref{errors_for_same_pefix_lem}, $T\geq 2^{\delta n}-2^{(\delta-\gamma)n}\geq2^{\delta n-1}$ for sufficiently large $n$. Define $\overline{\mt}\triangleq\{1,...,2^{\delta n-1}\}$.
Define a new encoder $\widetilde{\enc}:\overline{\mathcal{V}}\times\overline{\mt}$ as follows. For every $(u,t)\in\overline{\mathcal{V}}\times\overline{\mt}$, $\widetilde{\enc}(u,~t)=\enc(u,~s_{\overline{\mathcal{Q}}_u}(t))$.
Note that, by Lemma~\ref{errors_for_same_pefix_lem}, $\vert\overline{\mathcal{V}}\vert\geq 2^{Rn}-2^{(R-\gamma)n}\geq 2^{Rn-1}$ for sufficiently large $n$. This, together with the fact that $\vert\overline{\mt}\vert=2^{\delta n-1}$ (as shown above), implies that $\widetilde{\enc}$ has the same asymptotic rate as $\enc$. Namely, $\lim_{n\rightarrow\infty}\log\left(\vert\overline{\mathcal{V}}\vert\cdot\vert\overline{\mt}\vert\right)=R+\delta=R_{\delta,\eta}(p)$.

Thus, it remains to show that the probability of decoding error $P_{max}^n$ with respect to $\widetilde{\enc}$ decays to zero as $n\rightarrow\infty$. To do this, we consider each of the two types of decoding error. 
First, for type-I errors, since the code associated with $\enc$ is a $\frac{\eta}{4}$-good systematic code for all $(u,~\by_1)$ whose existence is shown by Lemma~\ref{good_rand_code}, then the probability of such type of errors is bounded from above by $\frac{2^{(\delta-\frac{\eta}{4})n}}{2^{\delta n-1}}=2^{-\frac{\eta}{4}n+1}$. For errors of type-II, since $u\in\overline{\mathcal{V}}$ and since we pruned the code associated with $\enc$ by getting rid off the all the ``bad'' $s$, namely, those in $\bigcup_{u\in\overline{\mathcal{V}}}\mathcal{Q}_{u}$. Hence, according to Lemma~\ref{errors_for_same_pefix_lem}, any pair of codewords that have the same prefix are at Hamming distance strictly greater than $pn$. Since the channel can only erase at most $pn$ bits in the second step, the decoder will always know which codeword was originally transmitted and hence decodes successfully. Thus, the pruned code associated with $\widetilde{\enc}$, can correct all type-II errors .

Summing up, the probability of decoding error according to the above analysis of the two types of errors is $\pr(\text{Type-I error occurs})\leq 2^{-\frac{\eta}{4}n+1}=2^{-\Omega(n)}$ which can be made arbitrarily small for sufficiently large $n$.

\appendix

\section{Proofs of Section~\ref{UPPER}}
We give here the proofs of Lemmas~\ref{step1_ub} and \ref{main_lem_ub} of Section~\ref{UPPER}.
\vspace{0.5cm}
\subsection{Proof of Lemma~\ref{step1_ub}:}~
Note that we have $I(U;\bbx_1)\leq H(\bbx_1)\leq\ell=(R_{\sf{Upper}}(p)+\frac{\epsilon}{2})n$. Thus,
\begin{align}
H(U|\bbx_1)&\geq H(U)-(R_{\sf{Upper}}(p)+\frac{\epsilon}{2})n=nR-(R_{\sf{Upper}}(p)+\frac{\epsilon}{2})n=n\frac{\epsilon}{2}\nonumber
\end{align}
By Markov's inequality,
\begin{align}
\pr\left(nR-H(U|\bbx_1=\bx_1)\geq nR-n\frac{\epsilon}{4}\right)&\leq\frac{nR-n\frac{\epsilon}{2}}{nR-n\frac{\epsilon}{4}}\nonumber\\
&=1-\frac{\frac{\epsilon}{4}}{R-\frac{\epsilon}{4}}\nonumber
\end{align}
Since $\epsilon\leq R\leq 1$, it follows that $\pr\left(H(U|\bbx_1=\bx_1)\geq n\frac{\epsilon}{4}\right)\geq\frac{\epsilon}{4}$

\subsection{Proof of Lemma~\ref{main_lem_ub}:}
First, we give the following two lemmas that will be useful in proving Lemma~\ref{main_lem_ub}.

\begin{lemma}\label{plotkin}(\textbf{Plotkin's Bound}\footnote{Here, we give the version of Plotkin's bound on the average distance of a binary code.}~\cite{PLTKN})
A binary $(M,n)$ code $\mc$ must satisfy $M\leq\frac{2d_{avg}}{2d_{avg}-n}$ whenever the average distance $d_{avg}> \frac{n}{2}$ where $d_{avg}=\frac{1}{M(M-1)}\sum_{\bx,\by\in\mc}d_H(\bx,\by)$.
\end{lemma}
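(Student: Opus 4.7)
The plan is to prove Plotkin's bound by evaluating the total pairwise Hamming distance $T \triangleq \sum_{\bx, \by \in \mc} d_H(\bx, \by)$ in two different ways. On the one hand, the definition of $d_{avg}$ gives immediately $T = M(M-1) d_{avg}$ (the diagonal terms $\bx = \by$ contribute zero, so the sum collapses to the one over ordered pairs of distinct codewords).

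On the other hand, I would expand $T$ column by column. For each coordinate $i \in \{1, \ldots, n\}$, let $a_i$ denote the number of codewords in $\mc$ whose $i$th bit is $0$; then $M - a_i$ codewords carry a $1$ in position $i$, and the number of ordered pairs $(\bx, \by) \in \mc \times \mc$ that disagree in coordinate $i$ is exactly $2 a_i (M - a_i)$. Summing over all positions yields $T = 2 \sum_{i=1}^n a_i (M - a_i)$. Applying the elementary inequality $a(M-a) \leq M^2/4$ (equivalently, AM--GM, with equality when $a = M/2$) to each column then gives $T \leq n M^2 / 2$.

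Combining the two expressions for $T$ produces $M(M-1) d_{avg} \leq n M^2 / 2$, which rearranges to $(2 d_{avg} - n)\, M \leq 2 d_{avg}$. Under the hypothesis $d_{avg} > n/2$, the factor $2 d_{avg} - n$ is strictly positive, so dividing both sides by it preserves the inequality and yields the claimed bound $M \leq 2 d_{avg} / (2 d_{avg} - n)$. The argument is entirely elementary; the only points demanding a bit of care are keeping ordered versus unordered pairs straight in the double count and invoking the hypothesis $d_{avg} > n/2$ precisely at the step where one needs the sign of the divisor to be positive, since otherwise the final rearrangement would flip direction and give a vacuous statement.
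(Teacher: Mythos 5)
Your proof is correct, and it is the standard double-counting argument for the average-distance form of Plotkin's bound: compute $T=\sum_{\bx,\by\in\mc}d_H(\bx,\by)$ once from the definition of $d_{avg}$ and once column-by-column via $2\sum_i a_i(M-a_i)\le nM^2/2$, then rearrange using $2d_{avg}-n>0$. The paper itself states this lemma as a cited classical result and offers no proof, so there is nothing in the paper to compare against; your argument matches the textbook treatment and fills that gap cleanly.
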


\begin{lemma}\label{suff_entropy}(\cite{DJLS}, Lemma~3)
Let $V$ be a random variable on a discrete finite set $\mv$ with entropy $H(V)\geq\lambda$, and let $V_1, V_2,..., V_m$ be i.i.d. copies of $V$. Then
\begin{align}
&\hspace{-.05cm}\pr\left(\{V_i: i=1,...,m\} \text{are\hspace{-.05cm} all distinct}\right)\hspace{-.05cm}\geq\hspace{-.1cm}\left(\frac{\lambda-1-\log(m)}{\log(\vert\mv\vert)}\right)^{m-1}.\nonumber
\end{align}
\end{lemma}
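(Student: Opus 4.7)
The plan is to reduce the distinctness probability to a telescoping product of conditional one-step probabilities, and then control each one-step probability via an entropy inequality that says $V$ cannot be too concentrated on any moderately small set. Concretely, I will write
\[
\pr\bigl(V_1,\ldots,V_m \text{ all distinct}\bigr)=\prod_{i=2}^{m}\pr\bigl(V_i\notin\{V_1,\ldots,V_{i-1}\}\,\big|\,V_1,\ldots,V_{i-1}\text{ distinct}\bigr),
\]
and show that each factor is at least $\frac{\lambda-1-\log m}{\log|\mv|}$, so the product is at least the claimed $\left(\frac{\lambda-1-\log m}{\log|\mv|}\right)^{m-1}$.

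The workhorse is the following entropy inequality, which I would prove first: for any $A\subseteq\mv$,
\[
\pr(V\in A)\;\leq\;\frac{1+\log|\mv|-\lambda}{\log(|\mv|/|A|)}.
\]
This follows from the chain rule applied to the pair $(V,\bone_{V\in A})$: writing $H(V)=H(\bone_{V\in A})+H(V\mid\bone_{V\in A})$, using $H(\bone_{V\in A})\leq 1$, and bounding the conditional entropy by $\pr(V\in A)\log|A|+(1-\pr(V\in A))\log|\mv|$; rearranging for $\pr(V\in A)$ and invoking the hypothesis $H(V)\geq\lambda$ gives the stated bound.

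Next, given that $V_1,\ldots,V_{i-1}$ are distinct the set $A_{i-1}\triangleq\{V_1,\ldots,V_{i-1}\}$ has \emph{deterministic} size $i-1$, and $V_i$ is independent of $(V_1,\ldots,V_{i-1})$, so the entropy inequality applied with $A=A_{i-1}$ yields
\[
\pr\bigl(V_i\notin A_{i-1}\,\big|\,V_1,\ldots,V_{i-1}\bigr)\;\geq\;1-\frac{1+\log|\mv|-\lambda}{\log(|\mv|/(i-1))}\;=\;\frac{\lambda-1-\log(i-1)}{\log(|\mv|/(i-1))}.
\]
To conclude, I would verify that the right-hand side is at least $c\triangleq\frac{\lambda-1-\log m}{\log|\mv|}$ for every $i\in\{2,\ldots,m\}$. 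Cross-multiplying, this reduces to $\lambda-1-\log(i-1)-c\log|\mv|+c\log(i-1)\geq 0$, i.e., to $\log m\geq (1-c)\log(i-1)$, which holds because $c\geq 0$ (we are of course in the nontrivial regime $\lambda>1+\log m$, else the claimed bound is vacuous) and $i-1\leq m$. Multiplying the resulting $m-1$ factors gives the claim.

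The potential obstacle is the monotonicity/comparison step: one might naively hope each conditional probability grows in $i$, but in fact both numerator and denominator of the tight bound shrink, so the cleanest route is the direct algebraic check above rather than a calculus argument. Beyond that the argument is essentially routine, and the only conceptual ingredient is the entropy–concentration inequality, whose one-line chain-rule proof does all the work.
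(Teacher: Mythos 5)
The paper never proves this lemma: it is quoted from \cite{DJLS} (their Lemma~3) and used as a black box, so there is no internal proof to compare against; what matters is whether your self-contained argument is sound, and it is. The chain rule applied to $(V,\bone_{V\in A})$ indeed gives $H(V)\le 1+\pr(V\in A)\log|A|+(1-\pr(V\in A))\log|\mathcal{V}|$, hence $\pr(V\notin A)\ge\frac{\lambda-1-\log|A|}{\log(|\mathcal{V}|/|A|)}$; independence of $V_i$ from $(V_1,\dots,V_{i-1})$ lets you apply this realization by realization with $A=\{V_1,\dots,V_{i-1}\}$ of deterministic size $i-1$, and the telescoping product then yields the claim. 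Two minor remarks. First, your comparison step is more work than needed: in the nontrivial regime $\lambda\ge 1+\log m$ the numerator satisfies $\lambda-1-\log(i-1)\ge\lambda-1-\log m\ge 0$ while the denominator satisfies $\log(|\mathcal{V}|/(i-1))\le\log|\mathcal{V}|$, so each factor dominates $\frac{\lambda-1-\log m}{\log|\mathcal{V}|}$ immediately; this also gives $i-1\le m<|\mathcal{V}|$, so the denominators are positive and the conditioning events have positive probability, which the telescoping identity implicitly requires. Second, your caveat about the trivial regime is the right reading: when $\lambda<1+\log m$ the stated bound is to be treated as vacuous (as literally written it could even misbehave for even $m-1$), but that is a feature of the statement as quoted from \cite{DJLS}, and in this paper's application (Proposition~\ref{distinctness_of_m_messages}, with $\lambda=n\epsilon/4$, $m=9/\epsilon$ fixed and $n$ large) one is always in the nontrivial regime.
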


Now, consider the situation by the end of the ``wait'' phase. Let $\bbx_1=\bx_1$ for some $\bx_1\in A_{\epsilon}$. Suppose that we sample (with replacement) $m$ codewords $\bbx(1), ..., \bbx(m)$ from $\ml_{\bx_1}$ according to the conditional distribution of the encoder's output given the observed $\ell$-prefix $\bx_1$ denoted by $\pr_{\bbx\vert\bbx_1=\bx_1}$. That is, $\bbx(1), ..., \bbx(m)$ are i.i.d. with distribution $\pr_{\bbx\vert\bbx_1=\bx_1}$. Let $U_1,...,U_m$ denote the corresponding messages, respectively. Let $\bbx'$ denote the codeword drawn from $\ml_{\bx_1}$ by the adversary by the end of the ``wait'' phase and $U'$ denote the corresponding message. Note that the transmitted codeword $\bbx$ and the adversary's codeword $\bbx'$ are independent and identically distributed according to $\pr_{\bbx\vert\bbx_1=\bx_1}$ in the same fashion any pair in the set of $m$ codewords $\{\bbx(1), ..., \bbx(m)\}$ mentioned above are independent and identically distributed. Similarly, the original message $U$ and the adversary's message $U'$ are independent and identically distributed according to $\pr_{U\vert\bbx_1=\bx_1}$ (the conditional distribution of the message $U$ given that $\bbx_1=\bx_1$) in the same fashion any pair in the set of $m$ messages $\{U_1, ..., U_m\}$ mentioned above are independent and identically distributed.

\begin{prop}\label{distinctness_of_m_messages}
Let $E_3$ denote the event $\{U_1, ..., U_m$ are all distinct$\}$ for some integer $m$. Then, for sufficiently large $n$, we have $\pr(E_3~\vert~\bbx_1=\bx_1)\geq(\frac{\epsilon}{5})^{m-1}$.
\end{prop}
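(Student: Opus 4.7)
The proposition is an almost immediate application of Lemma~\ref{suff_entropy}, the entropy-based distinctness bound. The plan is to specialize that lemma to our setting, where the random variable is $U$ conditioned on the observed prefix $\bx_1$.

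First, I would set $V$ to be the random variable $U$ distributed according to $\pr_{U\vert\bbx_1=\bx_1}$, which takes values in $\mv = \mU$. The $m$ i.i.d. copies $V_1,\dots,V_m$ of Lemma~\ref{suff_entropy} correspond exactly to $U_1,\dots,U_m$, the messages associated to the $m$ codewords sampled i.i.d. from $\ml_{\bx_1}$. Since $\bx_1 \in A_\epsilon$, the definition of $A_\epsilon$ in~(\ref{A_eps}) gives $H(U\vert\bbx_1=\bx_1)>n\epsilon/4$, so we may invoke Lemma~\ref{suff_entropy} with $\lambda = n\epsilon/4$. Also, $\log|\mv| = \log|\mU| = nR$.

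Plugging in, Lemma~\ref{suff_entropy} yields
\begin{align}
\pr(E_3 \mid \bbx_1 = \bx_1) \;\geq\; \left(\frac{n\epsilon/4 - 1 - \log m}{nR}\right)^{m-1}.\nonumber
\end{align}
Without loss of generality $R \leq 1$ (otherwise the theorem is trivial, since no code has rate greater than $1$), so $R/5 \leq 1/5 < 1/4$, which gives the constant gap $1/4 - R/5 \geq 1/20$. Rearranging, the bound exceeds $(\epsilon/5)^{m-1}$ as soon as $n\epsilon/4 - 1 - \log m \geq nR\epsilon/5$, i.e., $n \geq 20(1+\log m)/\epsilon$. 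Since $m$ will be a fixed integer (chosen as a function of $\epsilon$ later in the overall proof, and in particular independent of $n$), this inequality holds for all sufficiently large $n$, which completes the argument.

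The only subtlety is bookkeeping: one needs to make sure the $m$ in the statement is indeed fixed (or at worst grows slowly with $n$) so that $\log m = o(n\epsilon)$. Since the outer argument ultimately chooses $m$ as a constant depending only on $\epsilon$, this causes no difficulty, and the proof is essentially a one-line reduction to Lemma~\ref{suff_entropy}.
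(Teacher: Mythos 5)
Your proof is correct and follows essentially the same route as the paper's: apply Lemma~\ref{suff_entropy} with $V$ the conditional distribution of $U$ given $\bbx_1=\bx_1$ and $\lambda=n\epsilon/4$, then absorb the $-1-\log m$ slack for large $n$. The only cosmetic difference is that you keep the exact denominator $\log|\mv|=nR$ and then use $R\le 1$, whereas the paper directly upper bounds the denominator by $n$; both lead to the same ``for sufficiently large $n$'' condition.
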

\begin{proof}
The proof follows from the fact that $\bx_1\in A_{\epsilon}$ and the result of Lemma~\ref{suff_entropy} above. We apply Lemma~\ref{suff_entropy} with the distribution of $V$ set to the distribution of $U$ conditioned on $\bbx_1=\bx_1$ and $\lambda=n\frac{\epsilon}{4}$. Hence, we get
\begin{align}
\pr(E_3~\vert~\bbx_1=\bx_1)&\geq\left(\frac{n\epsilon/4-\log(m)-1}{n}\right)^{m-1}\nonumber\\
&>\left(\frac{\epsilon}{4}-\frac{\log(m)}{n}\right)^{m-1}\nonumber
\end{align}
For fixed $m$ and sufficiently large $n$, we have $\log(m)\leq\frac{\epsilon}{20}n$. Hence, the proof is complete.
\end{proof}

 Let $\mc_m$ denote the collection of codewords $\{\bbx(1), ..., \bbx(m)\}$ picked in the fashion described above for some $m$ (to be decided later). Let $d_{avg}(\mc_m)$ be the average Hamming distance between any pair in $\mc_m$ defined as
\begin{align}
d_{avg}(\mc_m)&\triangleq\frac{1}{m(m-1)}\sum_{i\neq j}d_{H}(\bbx(i),\bbx(j))\nonumber
\end{align}
where the sum is over all distinct $i,~j$ in $\{1, ..., m\}$. Note that $d_{avg}(\mc_m)$ is a random variable. Now, suppose we condition on both events $\{\bbx_1=\bx_1\}$ and $E_3$ (Note that the former event has been already conditioned upon from the beginning of the proof). The following upper bound on $d_{avg}(\mc_m)$ holds with probability $1$.
\begin{align}
d_{avg}(\mc_m)&\leq\frac{1}{2}\frac{m}{m-1}(n-\ell)=\frac{m}{m-1}(p-\epsilon/4)n\label{plotkin1}
\end{align}
This bound follows directly from Plotkin's bound stated in Lemma~\ref{plotkin} above and the choice $\ell$ specified in the lemma statement.
By setting $m=\frac{9}{\epsilon}$, we further upper bound the right-hand side of (\ref{plotkin1}) to get
\begin{align}
d_{avg}(\mc_m)&\leq np - n\frac{\epsilon}{8}\nonumber
\end{align}
Hence, conditioned on $\{\bbx_1=\bx_1,~E_3\}$, the expected average Hamming distance $E\left[d_{avg}(\mc_m)\vert~\bbx_1=\bx_1,~E_3\right]$ is upper bounded as
\begin{align}
E\left[d_{avg}(\mc_m)\vert~\bbx_1=\bx_1,~E_3\right]&\leq np - n\frac{\epsilon}{8}\label{plotkin3}
\end{align}
On the other hand, we have
\begin{align}
E\left[d_{avg}(\mc_m)\vert~\bbx_1=\bx_1,~E_3\right]&=\frac{1}{m(m-1)}\sum_{i\neq j}E\left[d_{H}(\bbx(i),\bbx(j))\vert~\bbx_1=\bx_1,~E_3\right]\nonumber\\
&=E\left[d_{H}(\bbx(1),\bbx(2))\vert~\bbx_1=\bx_1,~E_3\right]\label{expected_d_avg_Cm2}
\end{align}
where (\ref{expected_d_avg_Cm2}) is due to symmetry, i.e., the fact that, conditioned on $\{\bbx_1=\bx_1,~E_3\}$, the distributions of all pairs $(\bbx(i),~\bbx(j))$ for $i\neq j$ in $\{1, ..., m\}$ are identical.
From (\ref{plotkin3}) and (\ref{expected_d_avg_Cm2}), we get
\begin{align}
E\left[d_{H}(\bbx(1),\bbx(2))\vert~\bbx_1=\bx_1,~E_3\right]&\leq np - n\frac{\epsilon}{8}\nonumber
\end{align}
Thus, by Markov's inequality, we have
\begin{align}
\pr\left(d_{H}(\bbx(1),\bbx(2))> np~\vert~\bbx_1=\bx_1, E_3\right)&\leq\frac{E\left[d_{H}(\bbx(1),\bbx(2))~\vert~\bbx_1=\bx_1, E_3\right]}{np}\nonumber\\
&\leq \frac{np - n\frac{\epsilon}{8}}{np}=1-\frac{\epsilon}{8p}\label{markov_d_avg1}
\end{align}

Finally, we derive bound (\ref{lem2_lb}) in the lemma statement as follows.
\begin{align}
\pr\left(d_{H}(\bbx,\bbx')> np, U\neq U'~\vert~\bbx_1=\bx_1\right)&=\pr\left(d_{H}(\bbx(1),\bbx(2))> np, U_1\neq U_2~\vert~\bbx_1=\bx_1\right)\label{lemma2_lb_fin1}\\
&\geq\pr\left(d_{H}(\bbx(1),\bbx(2))> np, E_3~\vert~\bbx_1=\bx_1\right)\label{lemma2_lb_fin2}\\
&=\pr\left(d_{H}(\bbx(1),\bbx(2))> np~\vert~\bbx_1=\bx_1,~E_3\right)\pr(E_3\vert~\bbx_1=\bx_1)\nonumber\\
&\geq \frac{\epsilon}{8p}(\frac{\epsilon}{5})^{9/\epsilon-1}= \epsilon^{O(1/\epsilon)}\label{lemma2_lb_fin4}
\end{align}
where (\ref{lemma2_lb_fin1}) follows from the fact that the joint distribution of $\left(\bbx, \bbx', U, U'\right)$ is the same as that of $\left(\bbx(1), \bbx(2), U_1, U_2\right)$, (\ref{lemma2_lb_fin2}) follows from the fact that $E_3$ implies $U_1\neq U_2$, and (\ref{lemma2_lb_fin4}) follows from Proposition~\ref{distinctness_of_m_messages} and (\ref{markov_d_avg1}). This completes the proof of Lemma~\ref{main_lem_ub}.

\vspace{0.5cm}
\section{Proofs of Section~\ref{LOWER}}
We give here the proofs of Lemma~\ref{forbid_ball_size}, Claim~\ref{limit_rate}, and Lemma~\ref{good_rand_code} of Section~\ref{LOWER}.
\vspace{0.5cm}
\subsection{Proof of Lemma~\ref{forbid_ball_size}:~}
For now, suppose $p\in(0,1/2),~1-2p\leq R<1$ and $0\leq q\leq\min(p, R)$. We start by giving a general upper bound for $B^{p, q}_{R}$ using the standard bounds on Hamming balls. First, it is easy to see that $B^{p, q}_{R}$ is given by
\begin{align}
B^{p, q}_{R}&=2^{qn} \sum_{i=0}^{(p-q)n} {{(1-R)n}\choose {i}}\nonumber
\end{align}
By using the standard bound on the Hamming ball of radius $(p-q)n$ in $\{1,0\}^{(1-R)n}$ whose exact size is $\sum_{i=0}^{(p-q)n}{{(1-R)n}\choose {i}}$, we get
\begin{align}
B^{p, q}_{R}&\leq 2^{qn}\cdot2^{\left((1-R)H(\frac{p-q}{(1-R)})+\beta\right) n}\label{general_bound_on_B_alpha}
\end{align}
for some arbitrarily small $\beta>0$ for sufficiently large $n$. Note that such bound is valid since $\frac{p-q}{1-R}\leq 1/2$.

Now, we make specific choices for our parameters. Let $p\in(0,~1/2)$. Let $\delta,~\eta>0$ be chosen such that
\begin{align}
\hspace{-.05cm}\delta + \eta &~\leq~ \left\{ \begin{array}{cc}
                      \hspace{-.25cm}3\log(4/3) -1 & ~~\text{if}~0<p<1/3 \\
                      \hspace{-.1cm}\frac{3}{2}\log(4/3)-(\frac{3}{2}\log(4/3)+1)p &  ~ \text{ if}~1/3\leq p<p_1 \\
                      \hspace{-.25cm}1-H(p) &  ~\text{ if } p_1\leq p< 1/2
                    \end{array}\right.\label{bound_on_delta_eta}
\end{align}
where $p_1=\frac{3\log(4/3)}{2+3\log(4/3)}$. Let $R_{\delta, \eta}(p)$ be as given in the lemma statement and $R=R_{\delta, \eta}(p)-\delta$. Hence, for $q\in[0,~\min\left(R_{\delta, \eta}(p)-\delta, p\right)]$, (\ref{general_bound_on_B_alpha}) can be written as
\begin{align}
B^{p, q}_{R}&\leq 2^{(f_{p, \delta, R_{\delta, \eta}(p)}(q)+\beta)n}\label{gen_ball_bound}
\end{align}
where
\begin{align}
\hspace{-.1cm}f_{p, \delta, R_{\delta, \eta}(p)}(q)&=q\hspace{-.07cm}+\hspace{-.05cm}(1\hspace{-.05cm}-\hspace{-.05cm}R_{\delta, \eta}(p)\hspace{-.05cm}+\hspace{-.05cm}\delta)H\left(\frac{p-q}{1-R_{\delta, \eta}(p)+\delta}\right)\nonumber
\end{align}

Now, let's consider the simple optimization problem where we seek to maximize $f_{p, \delta, R_{\delta, \eta}(p)}(q)$ over $q\in[0, \min\left(R_{\delta, \eta}(p)-\delta, p\right)]$. First, fix some $p\in(0,~p_1)$. It is not difficult to see that the maximizer $q^{\ast}$ of $f_{p, \delta, R_{\delta, \eta}(p)}(q)$ in $[0,~\min\left(R_{\delta, \eta}(p)-\delta, p\right)]$ is given by
\begin{align}
&q^{\ast}=p-\frac{1}{3}(1-R_{\delta, \eta}(p)+\delta) ~\text{ whenever } ~0\leq p-\frac{1}{3}(1-R_{\delta, \eta}(p)+\delta)\leq R_{\delta, \eta}(p)-\delta\label{q_star_first_range}
\end{align}
The first two constraints on $\delta+\eta$ in (\ref{bound_on_delta_eta}) and the setting of $R_{\delta, \eta}(p)$ for $p\in(0,~p_1)$ in (\ref{R_delta_eta}) imply the condition in (\ref{q_star_first_range}). Thus, one can easily verify that, for every $p\in(0, p_1)$, we have
\begin{align}
f_{p, \delta, R_{\delta, \eta}(p)}(q)&\leq f_{p, \delta, R_{\delta, \eta}(p)}(q^{\ast})=1-R_{\delta, \eta}(p)-\eta~~\quad \forall~q\in[0,~\min(R, p)]\label{ball_bound_1}
\end{align}

Next, fix some $p\in[p_1, 1/2)$. One can easily verify that the maximizer $q^{\ast}$ of $f_{p, \delta, R_{\delta, \eta}(p)}(q)$ over $[0,~\min(R, p)]$ (note that $R = R_{\delta, \eta}(p)-\delta$) is given by
\begin{align}
&q^{\ast}=R_{\delta, \eta}(p)-\delta ~\text{  whenever }~ R_{\delta, \eta}(p)-\delta < p-\frac{1}{3}(1-R_{\delta, \eta}(p)+\delta)\label{q_star_second_range}
\end{align}
Now, we show that the last constraint on $\delta+\eta$ in (\ref{bound_on_delta_eta}) and the setting of $R_{\delta, \eta}(p)$ for $p\in[p_1,~1/2)$ in (\ref{R_delta_eta}) imply the condition in (\ref{q_star_second_range}). Observe that $r_{\delta, \eta}(p)$ in (\ref{R_delta_eta}), if exists, is the root of $G_p(x)+\delta+\eta=0$, where $G_p(x)$ is given by (\ref{S_p_of_x}), in the interval $[0, \frac{3}{2}p-\frac{1}{2})$. Hence, we have $0\leq R_{\delta, \eta}(p)-\delta<\frac{3}{2}p-\frac{1}{2}$ which implies the condition in (\ref{q_star_second_range}). Thus, it is left to show that there exists a root of $G_p(x)+\delta+\eta=0$ in $[0, \frac{3}{2}p-\frac{1}{2})$. To do this, notice that $G_p(x)$ is strictly increasing in $x$ over the interval $[0, \frac{3}{2}p-\frac{1}{2})$ with $G_p(0)=-(1-H(p))$ and $G_p(\frac{3}{2}p-\frac{1}{2})\geq 0~\forall~p\in[p_1,~\frac{1}{2})$. This together with the fact that $0<\delta+\eta\leq 1-H(p)$ (last constraint on $\delta+\eta$) implies the existence of a unique root $r_{\delta, \eta}(p)$ of $G_p(x)+\delta+\eta$ in the interval $[0, \frac{3}{2}p-\frac{1}{2})$.

Thus, for every $p\in[p_1,~1/2)$, we have
\begin{align}
f_{p, \delta, R_{\delta, \eta}(p)}(q)&\leq f_{p, \delta, R_{\delta, \eta}(p)}(q^{\ast})\nonumber\\
&=r_{\delta, \eta}(p)+(1-r_{\delta, \eta}(p))H(\frac{p-r_{\delta, \eta}(p)}{1-r_{\delta, \eta}(p)})\nonumber\\
&=1-r_{\delta, \eta}(p)-\delta-\eta\label{ball_bound_2a}\\
&=1-R_{\delta, \eta}(p)-\eta~~ \quad\forall~q\in[0,~\min(R, p)]\label{ball_bound_2b}
\end{align}
where  (\ref{ball_bound_2a}) follows from the fact that $r_{\delta, \eta}(p)$ is the root of $G_p(x)+\delta+\eta=0$.

Therefore, for every $p\in(0, 1/2)$, from (\ref{gen_ball_bound}), (\ref{ball_bound_1}), and (\ref{ball_bound_2b}), we get
\begin{align}
B^{p, q}_{R}&\leq 2^{\left(1-R_{\delta, \eta}(p)-\eta+\beta\right)n}~~ \quad\forall~q\in[0,~\min(R, p)]\nonumber
\end{align}
By choosing $n$ to be sufficiently large, we can make $\beta<\frac{\eta}{2}$ and thus we get the desired upper bound in the lemma. This completes the proof.
\vspace{0.5cm}

\subsection{Proof of Claim \ref{limit_rate}:~} We note that, for \mbox{$p\in(0,~p_1)$}, the result is immediate where $p_1$ is as given in Lemma~\ref{forbid_ball_size}. Let $p\in[p_1, 1/2)$. Consider $G_p(x)$ over the interval $[0, \frac{3}{2}p-\frac{1}{2})$ where $G_p(x)$ is as given in Lemma~\ref{forbid_ball_size}. It is easy to see that, over this interval, $G_p(x)$ is continuous and strictly increasing. Hence, we can define and inverse function $G^{-1}_p(y)$ that maps the range of $G_p(.)$ over $[0, \frac{3}{2}p-\frac{1}{2})$ to $[0, \frac{3}{2}p-\frac{1}{2})$. Note that $G^{-1}_p(y)$ is also continuous and strictly increasing over this interval (i.e., over the image of $[0, \frac{3}{2}p-\frac{1}{2})$ under $G_p(.)$). In this manner, $r_{\delta, \eta}(p)$, as given in Lemma~\ref{forbid_ball_size}, is indeed $G^{-1}_{p}\left(-(\delta+\eta)\right)$. Hence, by the continuity of $G^{-1}_p$, we have
\begin{align}
\lim_{\delta+\eta\rightarrow 0}r_{\delta, \eta}(p)&=\lim_{\delta+\eta\rightarrow 0}G^{-1}_p\left(-(\delta+\eta)\right)=G^{-1}_p(0)=r(p)\nonumber
\end{align}
where $r(p)$ is as given in Theorem~\ref{lower_bound}. This completes the proof.

\vspace{1cm}
\subsection{Proof of Lemma~\ref{good_rand_code}:}
We will first prove that, with probability $1-e^{-2^{\Omega(n)}}$ over the choice of $\enc$, the code associated with $\enc$ is $\frac{\eta}{4}$-good for a fixed pair $(u,~\by_1)$. Then, we conclude the proof by applying the union bound taken over all such pairs.

Let $q\in[0,~\min(R, p)]$. Let $u\in\mU$ and $\by_1\in\{0,1,\wedge\}^{R n}$ such that $\by_1$ is the resulting vector after erasing some $qn$ bits of $u$. Let $\sufx(u)$ denote the set of suffixes of the codewords that correspond to all the messages $u'\neq u$. That is, $\sufx(u)=\{\bbx_2(u',s'):~u'\in\mU\setminus\{u\},~s'\in\ms\}$. Note, by the choice of the code, $\sufx(u)$ is a set of independent and uniformly distributed random variables over $\{0,1\}^{(1-R)n}$. Now, conditioned on the value of local randomness $S=s$, one can think of two sources of randomness in the choice of $\enc$, namely, $\bbx_2(u,s)$ (the suffix of the actual codeword) and $\sufx(u)$ (the list of suffixes of codewords corresponding to all $u'\neq u$).
Note that the event $\ei$ (defined in (\ref{ei})) depends on both sources of randomness in the choice of $\enc$ as well as the randomness due to the choice of $S$. In fact, it can be, equivalently, written as
\begin{align}
\ei\triangleq& \bigg\{\exists~\bbx_2(u',s')\in\sufx(u):\left(u',~\bbx_2(u',s')\right)\in B_{R}^{p,q}\left(\by_1,~\bbx_2(u, S)\right)\bigg\}\nonumber
\end{align}
The probability of $\ei$ taken over the choice of $S$, denoted by $\peI$, is now a random variable since it depends on the choice of $\enc$, namely, it depends on both $\bbx_2(u,S)$ and $\sufx(u)$.

Our goal is to show that $\pr\left(\peI\leq 2^{-\frac{\eta}{4}n}\right)\geq 1-e^{-2^{\Omega(n)}}$ where the outer probability is over the choice of the code (that is, over $\bbx_2(u,S)$ and $\sufx(u)$). To do this, we will show the existence of a subset $\wellb$ of the set of all the possible realizations of $\sufx(u)$, such that, for sufficiently small $\eta>0$, we have
\begin{align}
&\pr\left(\sufx(u)\in\wellb\right)\geq 1-e^{-2^{\Omega(n)}}\nonumber
\end{align}
and
\begin{align}
\pr\left(\peI\leq 2^{-\frac{\eta}{4}n}~\vert~\sufx(u)\in\wellb\right)&\geq 1-e^{-2^{\Omega(n)}}\nonumber
\end{align}



Now, for every $\bz\in\{0,1\}^{(1-R)n}$, define
\begin{align}
&L_{\bz}\left(\by_1,~\sufx(u)\right)\triangleq\bigg\vert\bigg\{\bbx_2(u',s')\in\sufx(u): \left(u',~\bbx_2(u',s')\right)\in B_{R}^{p,q}\left(\by_1,~\bz\right)\bigg\}\bigg\vert\nonumber
\end{align}
Let $\mathbf{1}(.)$ denote the indicator function that takes value $1$ whenever its argument is true and $0$ otherwise. Note that $\ei$ is equivalent to the event that \mbox{$\big\{L_{\bbx_2(u,S)}\left(\by_1,~\sufx(u)\right)\geq 1\big\}$.} Hence, conditioned on $\{\bbx_2(u,s)=\bx_2(u,s),~\sufx(u)=\bc(u):s\in\ms\}$ (that is, for a fixed set of suffixes of all the codewords), we have
\begin{align}
\peI&=\sum_{s\in\ms}\pr(S=s)\mathbf{1}\left(L_{\bx_2(u,s)}\left(\by_1,~\bc(u)\right)\geq 1\right)\nonumber\\
&=\frac{1}{2^{\delta n}}\sum_{s\in\ms}\mathbf{1}\left(L_{\bx_2(u,s)}\left(\by_1,~\bc(u)\right)\geq 1\right)\nonumber
\end{align}

A crucial part in the proof is to obtain, for every $s\in\ms$, an upper bound on
$$E\left[L_{\bbx_2(u,s)}\left(\by_1,\sufx(u)\right)~\Big\vert~\sufx(u)=\bc(u)\right]$$
for any realization $\bc(u)$ of $\sufx(u)$ that lies in a set (denoted as $\wellb$) of an overwhelming probability over the choice of $\sufx(u)$. We proceed as follows. First, we find a set $\wellb$ of realizations of $\sufx(u)$ for which the sum $\sum_{\bz\in\{0,1\}^{(1-R)n}}L_{\bz}\left(\by_1, \sufx(u)\right)$ is not too far from its expectation (over $\sufx(u)$) and show that such set has an overwhelming probability (over the choice of $\sufx(u)$). Using this we then obtain an upper bound on the conditional expectation $E\left[L_{\bbx_2(u,s)}\left(\by_1,\sufx(u)\right)~\Big\vert~\sufx(u)=\bc(u)\right]$ (over $\bbx_2(u,s)$) for every $s\in\ms$ and every $\bc(u)\in\wellb$. Finally, we use this to show that, with overwhelming probability (over the choice of the suffixes $\{\bbx_2(u,s):~s\in\ms\}$), we have $\peI\leq 2^{-\frac{\eta}{4}n}$.

For every $u'\in\mU\setminus\{u\},~s'\in\ms$, define
\begin{align}
&V_{u',s'}\left(\by_1,~\sufx(u)\right)\triangleq\bigg\vert\bigg\{\bz\in\{0,1\}^{(1-R)n}:~\enc(u',s')\in B^{p,q}_{R}(\by_1, \bz)\bigg\}\bigg\vert\label{V_u'}
\end{align}
Now, observe that
\begin{align}
\hspace{-2cm}\sigma\left(\by_1, \sufx(u)\right)&\triangleq\sum_{\bz\in\{0,1\}^{(1-R)n}}L_{\bz}\left(\by_1, \sufx(u)\right)\label{equal_sums}\\
&=\sum_{u'\in\mU\setminus\{u\}}\sum_{s'\in\ms}V_{u',s'}\left(\by_1,~\sufx(u)\right)\nonumber\\
&=\bigg\vert\bigg\{(u',s',\bz)\in\left(\mU\setminus\{u\}\right)\times\ms\times\{0,1\}^{(1-R)n}:\enc(u',s')\in B^{p,q}_{R}(\by_1, \bz)\bigg\}\bigg\vert\nonumber
\end{align}
On the other hand, \mbox{$\big\{V_{u',s'}\left(\by_1, \sufx(u)\right): u'\in\mU\setminus\{u\},$} $s'\in\ms\big\}$ are independent random variables. Moreover, for every $u'\in\mU\setminus\{u\},~s'\in\ms$
\begin{align}
E[V_{u',s'}\left(\by_1,~\sufx(u)\right)]&\leq 2^{(1-R)n}\frac{B^{p,q}_{R}}{2^n-2^{(1-R)n}}\leq 2\cdot2^{(1-R)n}\frac{B^{p,q}_{R}}{2^n}=\frac{2B^{p,q}_{R}}{2^{R n}}\nonumber
\end{align}
for sufficiently large $n$. For all \mbox{$(u',s')\in\left(\mU\setminus\{u\}\right)\times\ms$}, let $\tilde{V}_{u',s'}\left(\by_1,~\sufx(u)\right)\triangleq\frac{V_{u',s'}(\by_1,~\sufx(u))}{B^{p,q}_{R}}$. Clearly, from (\ref{V_u'}), we have $\tilde{V}_{u',s'}\left(\by_1,~\sufx(u)\right)\leq 1$ for all \mbox{$(u',s')\in\left(\mU\setminus\{u\}\right)\times\ms$}.


Let
\begin{align}
\tilde{\sigma}\left(\by_1, \sufx(u)\right)&\triangleq\sum_{u'\in\mU\setminus\{u\}}\sum_{s'\in\ms}\tilde{V}_{u',s'}\left(\by_1,~\sufx(u)\right)\label{sigma-tilde}
\end{align}
Hence, we have
\begin{align}
\pr\left(\sigma\left(\by_1, \sufx(u)\right)\geq 2^{n\delta+2}B^{p,q}_{R}\right)&=\pr\left(\tilde{\sigma}\left(\by_1, \sufx(u)\right)\geq 2^{n\delta+2}\right)\label{sum_of_L_bx_2_a}\\
&\leq\pr\left(\tilde{\sigma}\left(\by_1, \sufx(u)\right)\geq 4\frac{2^{nR_{\delta,\eta}(p)}-2^{\delta n}}{2^{R n}}\right)\nonumber\\
&\leq e^{-2^{\Omega(n)}}\label{sum_of_L_bx_2_e}
\end{align}
where (\ref{sum_of_L_bx_2_a}) follows from (\ref{equal_sums}), (\ref{sigma-tilde}), and the definition of $\tilde{V}_{u',s'}\left(\by_1, \sufx(u)\right)$ above, and (\ref{sum_of_L_bx_2_e}) follows from Chernoff-Hoeffding bound \cite{hoeffding} restated in the following lemma.

\begin{lemma}\label{hoeff}(Chernoff-Hoeffding)
Let $X_1, X_2, ..., X_N$ be independent random variables taking values in $[0,~1]$ with expectation at most $\mu$. Then,
\begin{align}
\pr\left(\sum_{i=1}^N X_i\geq 2\mu N\right)&\leq e^{-\Omega(\mu N)}\nonumber
\end{align}
\end{lemma}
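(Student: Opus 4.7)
The lemma is the classical upper-tail Chernoff bound for bounded non-negative variables, and my plan is the textbook exponential-moment (Bernstein) argument. For any parameter $t>0$, Markov's inequality applied to $e^{tS}$, where $S\triangleq\sum_{i=1}^N X_i$, yields
$$\pr\left(S\geq 2\mu N\right)\leq e^{-2t\mu N}\,E[e^{tS}],$$
and independence of the $X_i$'s turns $E[e^{tS}]$ into the product $\prod_{i=1}^N E[e^{tX_i}]$. The task thus reduces to controlling each factor $E[e^{tX_i}]$ and then optimizing $t$.

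The key step, which I regard as the main (though still easy) technical point, is to exploit the constraint $X_i\in[0,1]$ through convexity of $x\mapsto e^{tx}$: on the segment $[0,1]$ we have the chord bound $e^{tx}\leq 1+(e^t-1)x$. Taking expectations and using $E[X_i]\leq\mu$ together with the elementary inequality $1+y\leq e^y$ gives
$$E[e^{tX_i}]\leq 1+(e^t-1)\mu\leq \exp\bigl((e^t-1)\mu\bigr).$$
Multiplying across the $N$ independent factors and combining with the Markov step produces
$$\pr\left(S\geq 2\mu N\right)\leq \exp\bigl(\mu N(e^t-1-2t)\bigr).$$

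It remains to choose $t$ so that the bracket in the exponent is negative. Minimizing $e^t-1-2t$ in $t>0$ sets the derivative $e^t-2$ to zero, giving $t=\ln 2$ and the constant value $1-2\ln 2\approx -0.386<0$. Substituting yields
$$\pr\left(S\geq 2\mu N\right)\leq e^{-(2\ln 2-1)\mu N}=e^{-\Omega(\mu N)},$$
which is exactly the claimed bound. No genuine obstacle arises here, since the lemma is a classical restatement; the only point worth double-checking when applying it elsewhere in the paper is that the random variables being plugged in genuinely lie in $[0,1]$ and have mean at most the value of $\mu$ being used in the invocation.
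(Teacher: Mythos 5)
Your proof is correct: the Markov/exponential-moment step, the chord bound $e^{tx}\leq 1+(e^t-1)x$ on $[0,1]$ followed by $1+y\leq e^y$, and the choice $t=\ln 2$ giving exponent $-(2\ln 2-1)\mu N$ together yield exactly the stated $e^{-\Omega(\mu N)}$ bound. The paper itself does not prove this lemma but simply cites Hoeffding's classical result, and your argument is the standard derivation of that bound, so there is nothing to reconcile.
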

\vspace{0.6cm}

Now, we define $\wellb$ as the set of all realizations $\bc(u)$ of $\sufx(u)$ such that
\begin{align}
\sum_{\bz\in\{0,1\}^{(1-R)n}}L_{\bz}\left(\by_1, \bc(u)\right)&\leq 2^{n\delta+2}B^{p,q}_{R}\nonumber
 \end{align}
Thus, we have
\begin{align}
\pr\left(\sufx(u)\in\wellb\right)\geq 1-e^{-2^{\Omega(n)}}\nonumber
\end{align}

Next, we will show that, conditioned on $\big\{\sufx(u)\in\wellb\big\}$, we have
$$\peI=\frac{1}{2^{\delta n}}\sum_{s\in\ms}\mathbf{1}\left(L_{\bbx_2(u,s)}\left(\by_1, \sufx(u)\right)\geq 1\right)\leq 2^{-\frac{\eta}{4}n}$$
with overwhelming probability over the choice of $\{\bbx_2(u,s):~s\in\ms\}$. To do this, we first bound the conditional expectation $E\left[L_{\bbx_2(u,s)}\left(\by_1,\sufx(u)\right)~\Big\vert~\sufx(u)=\bc(u)\right]$ (over $\bbx_2(u,s)$) for every $s\in\ms$ and every $\bc(u)\in\wellb$.

Observe that, for every $s\in\ms$ and every $\bc(u)\in\wellb$,
\begin{align}
E\left[L_{\bbx_2(u,s)}\left(\by_1,\sufx(u)\right)~\Big\vert~\sufx(u)=\bc(u)\right]&\leq\frac{1}{2^{(1-R)n}}2^{n\delta+2}B^{p,q}_{R}\label{expected_N_u_c}\\
&\leq 4\cdot2^{-\frac{\eta}{2}n}\label{expected_N_u_d}
\end{align}
where (\ref{expected_N_u_c}) follows from the fact that, conditioned on $\big\{\sufx(u)=\bc(u)\big\}$ where $\bc(u)\in\wellb$, we must have  $\sum_{\bz\in\{0,1\}^{(1-R)n}}L_{\bz}\left(\by_1, \sufx(u)\right)\leq 2^{n\delta+2}B^{p,q}_{R}$, and (\ref{expected_N_u_d}) follows from Lemma~\ref{forbid_ball_size}. It follows that, for every $s\in\ms$ and every $\bc(u)\in\wellb$, we must have
\begin{align}
E\left[\mathbf{1}\left(L_{\bbx_2(u,s)}\left(\by_1,\sufx(u)\right)\geq 1\right)~\Big\vert~\sufx(u)=\bc(u)\right]&\leq E\left[L_{\bbx_2(u,s)}\left(\by_1,\sufx(u)\right)~\Big\vert~\sufx(u)=\bc(u)\right]\nonumber\\
&\leq 4\cdot2^{-\frac{\eta}{2}n}\label{bd-expect}
\end{align}

Moreover, observe that, conditioned on $\big\{\sufx(u)=\bc(u)\big\}$ where $\bc(u)\in\wellb$, the collection $\{L_{\bbx_2(u,s)}\left(\by_1,\bc(u)\right):~s\in\ms\}$ is independent and identically distributed. Recall that $\peI=\frac{1}{2^{\delta n}}\sum_{s\in\ms}\mathbf{1}\left(L_{\bbx_2(u,s)}\left(\by_1, \sufx(u)\right)\geq 1\right)$. Thus, we have
\begin{align}
&\pr\left(\peI\geq 2^{-\frac{\eta}{4}n}~\Big\vert~\sufx(u)\in\wellb\right)\nonumber\\
=&\sum_{\bc(u)\in\wellb}\pr\left(\peI\geq 2^{-\frac{\eta}{4}n}~\Big\vert~\sufx(u)=\bc(u)\right)\pr\left(\sufx(u)=\bc(u)~\Big\vert~\sufx(u)\in\wellb\right)\nonumber\\
\leq& e^{-2^{\Omega(n)}}\sum_{\bc(u)\in\wellb}\pr\left(\sufx(u)=\bc(u)~\Big\vert~\sufx(u)\in\wellb\right)\label{pr_sum_I_u_b}\\
=&e^{-2^{\Omega(n)}}\nonumber
\end{align}
where (\ref{pr_sum_I_u_b}) follows from (\ref{bd-expect}) and Chernoff-Hoeffding bound (Lemma~\ref{hoeff}).

Thus, we finally get
\begin{align}
\pr\left(\peI\leq 2^{-\frac{\eta}{4}n}\right)&\geq\pr\left(\peI\leq 2^{-\frac{\eta}{4}n}~\Big\vert~\sufx(u)\in\wellb\right)\pr\left(\sufx(u)\in\wellb\right)\nonumber\\
&\geq\left(1-e^{-2^{\Omega(n)}}\right)\left(1-e^{-2^{\Omega(n)}}\right)\nonumber\\
&=1-e^{-2^{\Omega(n)}}\label{pr_goodness_3}
\end{align}

So far, we have shown that, with overwhelming probability over the choice of $\enc$, the code associated with $\enc$ is $\frac{\eta}{4}$-good with respect to a fixed pair $(u,~\by_1)$. To complete the proof, we apply the union bound over all possible pairs $(u,~\by_1)\in\mU\times\{0,1,\wedge\}^{Rn}$ such that $\by_1$ has at most $\min(p, R) n$ erasures. Note that due to the doubly exponential probability profile of (\ref{pr_goodness_3}), we still attain $\frac{\eta}{4}$-goodness with probability at least $1-e^{-2^{\Omega(n)}}$ after applying the union bound.

\bibliographystyle{plain}
\bibliography{ref_coding_causal_ch,avc}

\end{document}